\documentclass[sip,biber]{now-journal} 

\usepackage[utf8]{inputenc}
\usepackage{amssymb}
\usepackage{float}

\usepackage{amsmath}
\usepackage{graphicx}
\usepackage{multirow}
\usepackage{threeparttable}
\usepackage{amssymb}
\usepackage{amsfonts}
\usepackage{algorithmic}
\usepackage{algorithm}
\usepackage{booktabs} 
\usepackage{subfigure}
\usepackage{biblatex}

\newcommand{\ie}{\emph{i.e.}}
\newcommand{\eg}{\emph{e.g.}}

\theoremstyle{plain}
\newtheorem{thm}{Theorem}[section]
\theoremstyle{remark}

\theoremstyle{remark}
\newtheorem*{rem*}{Remark}
\theoremstyle{plain}
\newtheorem{prop}[thm]{Proposition}
\newcommand{\blue}[1]{\textcolor{black}{#1}} 

\newif\ifdoubleblind

\title{Multimodal Signal Restoration with Signed Twofold Graph Learning}

\ifdoubleblind
  \author[1*]{Anonymous}
  \author[2]{Anonymous}
  \author[1]{Anonymous}
  \affil[1]{Anonymous Institution}
  \affil[2]{Anonymous Institution}
\else
  \author[1*]{Haruki Yokota}
  \author[2]{Hiroshi Higashi}
  \author[1]{Yuichi Tanaka}
  \affil[1]{The University of Osaka, Suita, Japan}
  \affil[2]{Kansai University, Suita, Japan}
\fi

\issuevolumeyear{2026}
\issuevolumenumber{xx}
\issueelocation{e} 
\articletype{Original Paper} 

\ifdoubleblind
  \articledatabox{Received xx xxxxx 2026; Revised xx xxxxx 2026\\[2pt]
  ISSN xxxx-xxxx; DOI 10.1561/xxx.xxxxxxxx}
\else
  \articledatabox{Received xx xxxxx 2026; Revised xx xxxxx 2026\\[2pt]
  ISSN xxxx-xxxx; DOI 10.1561/xxx.xxxxxxxx\\
  \copyright\ H. Yokota, H. Higashi and Y. Tanaka}
\fi

\OAstatement{This is an Open Access article, distributed under the terms of the Creative Commons Attribution licence (\url{http://creativecommons.org/licenses/by-nc/4.0/}), which permits unrestricted re-use, distribution, and reproduction in any medium, for non-commercial use, provided the original work is properly cited.}

\keywords{Graph Learning, Graph Signal Processing, Multimodal Signals, Algorithm Unrolling}

\ifdoubleblind\else
  \creditline*{Corresponding author: h.yokota@sip.comm.eng.osaka-u.ac.jp. The preliminary version of this work was presented at the Asia Pacific Signal and Information Processing Association Annual Summit and Conference (APSIPA ASC) 2025 \cite{yokota2025unrolled}.}
\fi

\begin{document}

\begin{abstract}
\blue{We propose a method for jointly learning signed twofold graphs and performing signal restoration on multimodal graph signals.}
Multimodal signals on sensor networks are commonly modeled under the twofold graph assumption (TGA), which represents spatial structure and inter-modality relations as two separate graphs.
Existing TGA-based signal restoration methods, however, either assume the graphs are known or restrict edge weights to be non-negative, preventing them from capturing negative inter-modal correlations. 
\blue{We address both limitations as follows.
To learn twofold graphs from noisy and incomplete data,} we formulate joint signal restoration and twofold graph learning as MAP estimation under a matrix normal prior, where the spatial and modality graph Laplacians appear directly as precision matrices.
The resulting non-convex objective is solved by alternating minimization:
The signal is updated via conjugate gradient applied to the arising Sylvester-type linear system; the graphs are updated via primal-dual hybrid gradient (PDHG).
\blue{To capture negative inter-modal correlations, we} estimate the signed structure of the modality graph from the dominant eigenspace of a complementary kernel matrix, which is then used in PDHG to update edge magnitudes.
These iterative solvers are then unrolled into a feedforward network, with regularization weights and step sizes as layer-wise trainable parameters.
Experiments on synthetic multimodal graph signals and \blue{two real-world datasets (Japan meteorological and Beijing air-quality data)} confirm that the proposed method outperforms existing baselines across a range of noise levels and missing-data patterns\blue{; the learned graphs are also directly validated against the ground-truth graphs on the synthetic datasets}.
\end{abstract}

\section{Introduction}
In sensor networks, sensors often capture a rich variety of data simultaneously, like video/audio streams and temperature/pressure readings.
Such data are called multimodal signals.
In practice, real-world multimodal signals are imperfect, suffering from noise or missing values.
Signal restoration for those multimodal signals, including denoising and interpolation, is therefore crucial for a wide range of applications, such as autonomous driving, human-computer interaction, and medical diagnostics \cite{xiao2022multimodal, liu2024EEGBased,liu2024Perspectives,arceo2025Robust}. 

A key principle in signal restoration on sensor networks is utilizing underlying structures of data, whose structures are mathematically represented as graphs \cite{shuman_emerging_2013, ortega_graph_2018, tanaka_sampling_2020}. 
Multimodal signals on a network can be modeled as a spatial graph, with underlying spatial structure, and inter-modality relations are represented by a modality graph. This assumption is known as \textit{twofold graph assumption} (TGA) \cite{nagahama2022multimodal}.

Previous works for signal restoration with TGA face two key challenges.
First, many approaches either assume graphs are known a priori or rely on heavily parameterized networks to learn them \cite{shahid2016fastrobust,wang2020graphsampling,zheng2022multimodal}, limiting applicability in data-scarce scenarios.
Second, all existing TGA methods are designed for \textit{unsigned} graphs, modeling only pairwise \textit{similarities} \cite{kojima2026unrolling}.
This neglects \textit{negative correlations} across modalities---for example, in environmental monitoring, the anti-correlation between temperature and precipitation is naturally encoded as a negative edge weight. Such a relationship can be represented by signed graphs \cite{Dinesh2025, yokota2025efficient}, but unsigned graphs cannot.

In this paper, we address these limitations by proposing a method that:
1) formulates joint signal restoration and twofold graph learning as MAP estimation under a matrix normal prior, linking the graph Laplacians directly to the signal prior;
2) learns a spatial unsigned graph and a modality signed graph simultaneously, using a sign-magnitude decoupling with a spectral sign estimator; and
3) unrolls the resulting alternating iterative solvers into an interpretable deep network whose hyperparameters are trained end-to-end.

Specifically, we model the multimodal signal as a matrix normal random variable \cite{gupta1999Matrix} with graph Laplacians as the precision matrices, yielding a MAP objective that couples data fidelity with twofold graph smoothness.
We solve the resulting non-convex problem by alternating minimization over three variables: the signal is updated by solving a Sylvester-type linear system via conjugate gradient (CG); the spatial graph is updated by primal-dual hybrid gradient (PDHG) \cite{condat_primal_2013}; and the signed modality graph is updated by decoupling sign from magnitude, estimating the sign structure via subspace iteration and solving for edge magnitudes with PDHG.
Finally, we unroll the CG and PDHG iterations into a feedforward network \cite{monga2021algorithm}, with regularization weights and step sizes as layer-wise trainable parameters.

We evaluate the proposed method on synthetic multimodal graph signals and \blue{two real-world datasets: a Japan meteorological dataset and a Beijing air-quality dataset}.
Experimental results demonstrate that the proposed method outperforms competing baselines in both denoising and interpolation across a range of noise levels and missing-data patterns\blue{, with statistically significant differences}.
\blue{We also validate the learned graphs directly on the synthetic datasets, 
where the learned graphs estimate the ground-truth edges more accurately in terms of edge AUC and relative adjacency errors than those of competing graph learning methods in most cases.}

\blue{The rest of this paper is organized as follows.
Section~\ref{sec:related} reviews related work on multimodal graph signal restoration and unsigned/signed graph learning.
Section~\ref{sec:proposed} presents the proposed joint formulation, the CG- and PDHG-based solvers, and the unrolled architecture.
Section~\ref{sec:experiments} reports experiments on synthetic and real-world datasets.
Section~\ref{sec:conclusion} concludes the paper.}

\vspace{3pt}\textit{Notation:}
We use bold uppercase letters for matrices ($\mathbf{A}$), and bold lowercase letters for vectors ($\mathbf{a}$).
The corresponding non-bold letters with subscripts denote their elements ($A_{ij}$ and $a_i$).
The $j$th column of matrix $\mathbf{A}$ is denoted as the vector $\mathbf{a}_{j}$. The identity matrix is $\mathbf{I}$, and vectors of all ones and zeros are $\mathbf{1} = [1,\dots,1]^{\top}$ and $\mathbf{0} = [0,\dots,0]^{\top}$, respectively. The \blue{$\ell_p$-norm} is $\|\cdot\|_p$ and the Frobenius norm is denoted as $\|\cdot\|_F$.
An undirected graph is denoted as $\mathcal{G}$, and we denote the set of $N$ nodes in $\mathcal{G}$ as $\mathcal{V}=\{v_1,v_2,\dots,v_N\}$. The topology and edge weights of $\mathcal{G}$ are defined by a symmetric weighted adjacency matrix $\mathbf{W}\in \mathbb{R}^{N\times N}$, where $W_{ij}$ represents the weight of the edge between node $v_i$ and $v_j$.
Throughout the paper, we assume $W_{ii}=0$, i.e., the graphs have no self-loops. 
The degree matrix is the diagonal matrix denoted as $\mathbf{D} = \operatorname{diag}(d_1,\dots,d_N)$, where $d_i=\sum_{j} W_{ij}$. The combinatorial graph Laplacian $\mathbf{L}$ for unsigned graphs is given by $\mathbf{L}=\mathbf{D}-\mathbf{W}$, and those for signed graphs are defined as $\bar{\mathbf{L}}=\bar{\mathbf{D}}-\bar{\mathbf{W}}$, where $\bar{\mathbf{D}}$ is an absolute degree matrix whose $i$th diagonal element is defined as $\bar{d}_{ii}=\sum_j|\bar{W}_{ij}|$.

\section{Related Works}
\label{sec:related}
In this section, we review the three bodies of work most directly related to the proposed method: multimodal graph signal restoration, unsigned graph learning, and signed graph learning.

\subsection{Multimodal Graph Signal Restoration}
Under the twofold graph assumption (TGA) \cite{nagahama2022multimodal,kojima2026unrolling}, a multimodal signal $\mathbf{X}\in\mathbb{R}^{N\times M}$ is assumed to be smooth on a spatial graph $\mathcal{G}_s$ over $N$ nodes and on a modality graph $\mathcal{G}_m$ over $M$ modalities simultaneously.
Signal restoration under TGA is then formulated as minimizing a regularizer involving both graph Laplacians.

The conventional TGA-based methods only consider the sum of two graph regularizers \cite{nagahama2022multimodal,kojima2026unrolling};
\begin{equation}
\label{eq:additive_tga}
\min_\mathbf{X} \|\mathbf{Y}-\mathcal{A}(\mathbf{X})\|_F^2 + \alpha\text{tr}(\mathbf{X}^\top\mathbf{L}_s\mathbf{X}) + \beta\text{tr}(\mathbf{X}\mathbf{L}_m\mathbf{X}^\top),
\end{equation}
where $\mathcal{A}(\cdot)$ is a degradation operator, and $\alpha$ and $\beta$ are regularization parameters. This formulation does not explicitly model the interaction between the two graphs.
\blue{Here, the two graphs are assumed to be given a priori in \cite{nagahama2022multimodal}: This is a limitation in practical scenarios where the graphs are unknown.
LLAP-DAU \cite{kojima2026unrolling} removes this assumption by learning the twofold graph inside its unrolled network;
however, the learned graphs are restricted to be unsigned, and negative inter-modal correlations can only be represented as weak or missing edges.
In addition, each layer of LLAP-DAU updates the signal with closed-form graph low-pass filters that require a matrix inversion for each of the spatial and modality sub-problems, whose cost grows cubically with the number of nodes and modalities.
}

Another assumption based on the combinations of spatial and temporal graphs is known as \textit{product graphs} \cite{jimenez2018product,jiang2021sensor}. 
Product graphs allow the modeling of the mixture of two graph priors, which has been shown to improve the performance of posterior tasks such as denoising and interpolation of signals observed in a spatio-temporal environment. 

The matrix normal prior considered in this work is related to the Kronecker-product graph, where the variation of signals observed on a node of the spatial graph is dominated by the structure defined in the modality graph. 

Note that, product graph-based methods often simplify the temporal (modality) graph with a simple graph structure, such as a path graph \cite{jiang2021sensor}, which ignores the correlational information of the domain.
Also, the conventional product graph methods require a large product graph Laplacian matrix $\mathbf{L}_\text{product}\in \mathbb{R}^{NM\times NM}$, whose dimension is the product of the dimensions of the two graphs.

Methods based on graph neural networks \cite{shahid2016fastrobust,wang2020graphsampling,zheng2022multimodal} can learn the graph implicitly but require large training sets and offer limited interpretability.

\blue{Algorithm unrolling has also been studied for graph signal restorations.
In \cite{nagahama2022nested}, a graph signal restoration method with nested algorithm unrolling has been proposed, where an unrolled graph signal denoiser is embedded within an unrolled restoration algorithm, achieving interpretable restoration with a small number of parameters.
However, the graph is fixed and given, and inter-modality relations are not modeled.
In \cite{batreddy2025inpainting}, an unrolled alternating scheme for joint inpainting and graph learning has been proposed, which is close in spirit to our approach.
Nevertheless, it learns a single unsigned graph, and thus neither the twofold structure nor negative correlations are captured.}

\subsection{Unsigned Graph Learning}
Let us consider an observation model of a graph signal $\mathbf{y} \in \mathbb{R}^{N}$ on a single spatial graph $\mathcal{G}$ with $N$ nodes as $\mathbf{y} = \mathbf{x} + \boldsymbol{\epsilon}$
where $\mathbf{x}$ is the unknown true signal and $\boldsymbol{\epsilon}$ is additive white Gaussian noise (AWGN) with $\boldsymbol{\epsilon} \sim \mathcal{N} (\mathbf{0},\sigma^{2}_{\epsilon}\mathbf{I}_{N})$.

Based on the graph factor analysis model \cite{dong_learning_2016}, $\mathbf{x}$ can be seen as a sample drawn from a multivariate Gaussian distribution with the precision matrix $\mathbf{L}$: $\mathbf{x} \sim \mathcal{N}(\boldsymbol{\mu}_x,\mathbf{L}^{\dagger}+\sigma_{\epsilon}^2\mathbf{I}_{N})$
where $\boldsymbol{\mu}_x$ is the mean of $\mathbf{x}$ and $(\cdot)^\dagger$ represents the Moore-Penrose pseudo inverse.  

The joint graph learning and signal restoration under this model is discussed in \cite{dong_learning_2016}, where they formulate an optimization problem to find $\mathbf{x}$ and $\mathbf{L}$ from $\mathbf{y}$ as:
\begin{align}
\label{eq:dong_smooth}
\min_{\mathbf{x},\mathbf{L}\in\mathcal{L}} \|\mathbf{x} - \mathbf{y}\|_2^2 + \alpha \mathbf{x}^\top\mathbf{L}\mathbf{x} + \frac{\beta}{2}\|\mathbf{L}\|_F^{2}\end{align}
where $\mathcal{L}$ is a set of graph Laplacians given by 
\begin{equation}
\mathcal{L} = \{\mathbf{L}|\text{tr}(\mathbf{L})=N,\ L_{ij} = L_{ji} \leq0,\ i\neq j, \mathbf{L}\mathbf{1} = 0\}, \nonumber
\end{equation}
and $\alpha$ and $\beta$ are regularization parameters. The first term measures the data fidelity, and the second term measures the signal variation on the given $\mathbf{L}$ where $\mathbf{x}^\top\mathbf{L}\mathbf{x}=\sum_{i<j}^N W_{ij}(x_i - x_j)^2$, and the third term is a regularization on edge weights. 
\eqref{eq:dong_smooth} is generally non-convex due to the coupling of variables $\mathbf{x}$ and $\mathbf{L}$ in the second term.

In \cite{dong_learning_2016}, the following alternating optimization is applied.
\begin{enumerate}
    \item $\mathbf{L}$ is optimized with the fixed $\mathbf{x}$ by solving the following constrained quadratic optimization problem:
\begin{equation}
\label{eq:graph_learning}
\min_{\mathbf{L}\in\mathcal{L}} \alpha\mathbf{x}^\top \mathbf{Lx} + \beta\|\mathbf{L}\|_F^2.
\end{equation}
    \item $\mathbf{x}$ is obtained with the fixed $\mathbf{L}$ by solving 
\begin{equation}
\min_\mathbf{x} \|\mathbf{x}-\mathbf{y}\|_2^2 + \alpha\mathbf{x}^\top\mathbf{Lx}.
\end{equation}
This has a closed-form solution $\mathbf{x} = \left(\mathbf{I}_N+\alpha\mathbf{L}\right)^{-1} \mathbf{y}$ as long as $\left(\mathbf{I}_N+\alpha\mathbf{L}\right)$ is invertible.
\end{enumerate}
These two steps are iterated until the difference from the previous iteration reaches a pre-defined tolerance value.
\blue{While this framework is effective for a single graph, it considers a single unsigned graph for vector-valued signals:
Applying it to multimodal matrix data requires learning each graph independently, which ignores the coupling between the spatial and modality structures.}

\subsection{Signed Graph Learning}

Most graph learning methods, including the aforementioned one, are designed for unsigned graphs, i.e., $W_{ij} \ge 0$ for all $i,j$.
Unsigned graphs encode similarities.
However, it is natural that we often encounter dissimilarities as well: In this case, signed graph learning is required \cite{yokota2025efficient,yang_signed_2022,dinesh_lineartime_2022}.

A signed graph uses both positive and negative edge weights, which are stored in a \textit{signed adjacency matrix} $\bar{\mathbf{W}}\in\mathbb{R}^{N\times N}$. 
The goal is to find a $\bar{\mathbf{W}}$ that reflects the relationships in a set of observed signals \blue{$\mathbf{X}\in\mathbb{R}^{P\times N}$, where $P$} is the number of observations or features and $N$ is the number of nodes.

An edge weight of a signed graph can be represented as $\bar{W}_{ij} = S_{ij} |\bar{W}_{ij}|$ where $S_{ij} = \operatorname{sign}(\bar{W}_{ij})$ is the edge sign matrix.
Therefore, signed graph learning is formulated as follows \cite{dittrich_learning_2020}:
\begin{equation}
\label{eq:signed_GL}\min_{\mathbf{S},\bar{\mathbf{W}}} \alpha\sum_{i,j}|\bar{W}_{ij}|\|\mathbf{x}_i-S_{ij}\mathbf{x}_j\|_2^2 + \beta\|\bar{\mathbf{W}}\|_F^2
\end{equation}
where $\bar{\mathbf{W}}$ is a symmetric matrix with zero-diagonals, and $\mathbf{S}\in\{-1,1\}^{N\times N}$.
Optimizing $\mathbf{S}$ together with $\bar{\mathbf{W}}$ is non-convex and hence, a two-step approach is considered in \cite{dittrich_learning_2020}.

First, compute the distance matrix $\bar{\mathbf{Z}}$ together with the sign matrix $\mathbf{S}$ as:
\begin{align}
\label{eq:signed_distance}
\bar{Z}_{ij} &= \min\{\|\mathbf{x}_i - \mathbf{x}_j\|_{2}^{2} , \|\mathbf{x}_i + \mathbf{x}_j\|_{2}^{2}\}\nonumber \\
S_{ij} &= \begin{cases}1 & \text{if } \|\mathbf{x}_i - \mathbf{x}_j\|_{2}^{2} \leq \|\mathbf{x}_i + \mathbf{x}_j\|_{2}^{2} \\ -1 & \text{otherwise}.\end{cases}
\end{align}
Second, given these precomputed signs $\mathbf{S}$ and distances  $\bar{\mathbf{Z}}$, optimize the weight magnitude $\mathbf{W}\geq\mathbf{0}$ by solving a convex optimization problem;
\begin{align}
\label{eq:GL_signed}
\min_{\mathbf{W}\in\mathcal{W}}\ &\alpha\sum_{i,j}W_{ij}\bar{Z}_{ij} + \frac{\beta}{2}\|\mathbf{W}\|_F^2,
\end{align}
where $\mathcal{W}=\{\mathbf{W}|W_{ij}=W_{ji},\ W_{ij}\geq0,\ W_{ii} =0\ \forall i\}$.
The signed adjacency matrix is then given by $\bar{\mathbf{W}}^\star = \mathbf{S} \circ \mathbf{W}$, where $\circ$ is the element-wise product. The minimization in \eqref{eq:GL_signed} can be efficiently solved using proximal gradient-based algorithms \cite{condat_primal_2013}.

\blue{Other studies on signed graphs focus on the inference of \textit{balanced} signed graphs \cite{Dinesh2025, yokota2025efficient, yang_signed_2022} from statistical evidence.
However, these methods assume fully observed signals and are not designed for joint restoration and graph learning from degraded observations.}

In \cite{dittrich_learning_2020}, the sign matrix $\mathbf{S}$ is estimated by comparing the \blue{sum/difference} of signal vectors (see \eqref{eq:signed_distance}), which is highly sensitive to noise.
In the proposed method, however, we learn the signed graph from the complementary kernel matrix $\mathbf{K} = \mathbf{X}^\top\mathbf{L}_s\mathbf{X}$, which captures the cross-modal variation of the signal $\mathbf{X}$ on the spatial graph $\mathcal{G}_s$.
Therefore, instead of computing the signs from distance, we estimate the sign structure from the dominant eigenspace of $\mathbf{K}$ via subspace iteration, yielding a continuous relaxation that is compatible with end-to-end training.

\section{Twofold Signed Graph Learning and Signal Restoration for Multimodal Signals}
\label{sec:proposed}
In this section, we present our proposed method for jointly restoring the multimodal signals and learning its underlying twofold graph.
The method proceeds via alternating minimization over three coupled variables: The signal $\mathbf{X}$, the spatial graph Laplacian $\mathbf{L}_s$, and the modality graph Laplacian $\mathbf{L}_m$.
We first establish the probabilistic problem formulation (Section~\ref{subsec:formulation}), then describe each update as a standalone convex sub-problem (Sections~\ref{subsec:signal_update}--\ref{subsec:graph_update}), and conclude by showing how these iterative solvers are unrolled into a deep network (Section~\ref{subsec:unrolling}).
\blue{In contrast to prior works that either assume the twofold graph is known or restrict edge weights to be non-negative, our method learns a spatial unsigned graph and a modality signed graph simultaneously, capturing both positive and negative inter-modal correlations.
In addition, the proposed method is based on inversion-free iterative solvers, which are more efficient than the closed-form graph low-pass filters used in prior works \cite{kojima2026unrolling}.}

\subsection{Problem Formulation}
\label{subsec:formulation}
\blue{We first state the observation model and then derive the joint objective from a probabilistic signal prior.}

\subsubsection{Observation Model}
Analogous to the multimodal graph signal model in related works \cite{nagahama2022multimodal,kojima2026unrolling}, we consider the observation model
\begin{equation}
    \mathbf{Y} = \mathcal{A}(\mathbf{X}) + \mathbf{N},
\end{equation}
where $\mathbf{X}\in\mathbb{R}^{N\times M}$ is the unknown true multimodal graph signal: its $M$ columns are signals on the spatial graph $\mathcal{G}_s$, and its $N$ rows are signals on the modality graph $\mathcal{G}_m$.
The operator $\mathcal{A}(\cdot)$ models a degradation process (\eg, missing-data masking), and $\mathbf{N}$ is noise.
We aim to jointly recover $\mathbf{X}$ and the twofold graph $(\mathcal{G}_s, \mathcal{G}_m)$ from $\mathbf{Y}$.

\subsubsection{MAP Estimation via Matrix Normal Distribution}
In contrast to prior works on TGA \cite{nagahama2022multimodal,kojima2026unrolling} that lack explicit signal models, we place a \textit{matrix normal prior} \cite{gupta1999Matrix} on $\mathbf{X}$:
\begin{equation}
\mathbf{X} \sim \mathcal{MN}(\mathbf{0},\mathbf{L}_s^\dagger,\mathbf{L}_m^\dagger),
\end{equation}
where $\mathbf{L}_s \in \mathbb{R}^{N\times N}$ and $\mathbf{L}_m \in \mathbb{R}^{M\times M}$ are the row-wise and column-wise precision matrices, respectively, identified with the spatial and modality graph Laplacians.
Without loss of generality, we assume the mean of $\mathbf{X}$ is zero.
The probability density of $\mathbf{X}$ is then
\begin{equation}
\label{eq:pdf_mn}
    p(\mathbf{X}) \propto \exp\!\left(-\tfrac{1}{2}\text{tr}\!\left[\mathbf{L}_m\mathbf{X}^{\top}\mathbf{L}_s\mathbf{X}\right]\right).
\end{equation}
The term $\text{tr}(\mathbf{L}_m\mathbf{X}^\top\mathbf{L}_s\mathbf{X})$ simultaneously penalizes spatial variation on $\mathcal{G}_s$ and cross-modal variation on $\mathcal{G}_m$.
\blue{Intuitively, two entries of $\mathbf{X}$ are assumed to co-vary strongly when their nodes are connected in $\mathcal{G}_s$ and their modalities are connected in $\mathcal{G}_m$. A signal that is smooth on both graphs thus has a high prior probability.}
Assuming i.i.d.\ noise $N_{ij}\sim\mathcal{N}(0,\sigma^2)$, the likelihood is
\begin{equation}
\label{eq:pdf_YX}
    p(\mathbf{Y}|\mathbf{X})\propto \exp\!\left(-\tfrac{1}{2\sigma^2}\|\mathbf{Y}-\mathcal{A}(\mathbf{X})\|_F^{2}\right).
\end{equation}
Applying Bayes' rule, the MAP estimator minimizes
\begin{align}
\label{eq:MAP_estimate}
\mathbf{X}^\star = \arg\min_\mathbf{X}\!\left(-\log p(\mathbf{Y}|\mathbf{X})-\log p(\mathbf{X})\right).
\end{align}

\subsubsection{Joint Optimization Objective}
Substituting \eqref{eq:pdf_mn} and \eqref{eq:pdf_YX} into \eqref{eq:MAP_estimate}, the MAP objective for $\mathbf{X}$ is
\begin{equation}
J(\mathbf{X}) = \tfrac{1}{2\sigma^2}\|\mathbf{Y}-\mathcal{A}(\mathbf{X})\|_F^2+\tfrac{1}{2}\text{tr}(\mathbf{L}_m\mathbf{X}^\top\mathbf{L}_s\mathbf{X}).
\end{equation}
For the missing-data case we set $\mathcal{A}(\mathbf{X})=\mathbf{M}\circ\mathbf{X}$, where $\mathbf{M}\in\{0,1\}^{N\times M}$ is a binary observation mask.
Since $\mathbf{L}_s$ and $\mathbf{L}_m$ are typically unknown, we jointly minimize over all three variables:
\begin{equation}
    \label{eq:original_objective}
    \min_{\mathbf{X},\,\mathbf{L}_s,\,\mathbf{L}_m \in \mathcal{L}}\; \frac{1}{2}\|\mathbf{M}\circ(\mathbf{Y}-\mathbf{X})\|_F^2 + \frac{\mu}{2}\text{tr}(\mathbf{L}_m\mathbf{X}^\top\mathbf{L}_s\mathbf{X}) + R_s(\mathbf{L}_s) + R_m(\mathbf{L}_m),
\end{equation}
where $\mu$ balances data fidelity against the twofold-graph regularization term, and $R_s(\cdot)$, $R_m(\cdot)$ are regularizers on the spatial and modality graphs, \eg, Frobenius-norm penalties promoting sparsity and controlled edge magnitudes.
\blue{In contrast to the additive regularization in \eqref{eq:additive_tga}, the trace term couples the two graphs multiplicatively: The modality graph determines on which modality pairs the spatial smoothness is enforced, and vice versa.}

\subsection{Signal Reconstruction via Unrolled Conjugate Gradient}
\label{subsec:signal_update}
\blue{This subsection derives the sub-problem for the signal $\mathbf{X}$ and presents its solver.}

\subsubsection{Formulation}
Due to the coupling of $\mathbf{X}$, $\mathbf{L}_s$, and $\mathbf{L}_m$ in the trace term, the objective \eqref{eq:original_objective} is non-convex in general.
We therefore adopt an alternating minimization strategy, updating one variable at a time while holding the others fixed.

With $\mathbf{L}_s$ and $\mathbf{L}_m$ fixed, \eqref{eq:original_objective} reduces to the convex quadratic sub-problem
\begin{equation}
\label{eq:signal_restore}
\min_\mathbf{X}\ F(\mathbf{X})=\frac{1}{2}\|\mathbf{M}\circ(\mathbf{Y}-\mathbf{X})\|_F^2+\frac{\mu}{2}\text{tr}(\mathbf{L}_m\mathbf{X}^\top\mathbf{L}_s\mathbf{X}).
\end{equation}
Setting $\nabla_\mathbf{X}F=\mathbf{0}$ yields the optimality condition
\begin{align}
\label{eq:sylvester}
    \mu\mathbf{L}_s\mathbf{X}\mathbf{L}_m + \mathbf{M}\circ\mathbf{X} = \mathbf{M}\circ\mathbf{Y}.
\end{align}
\blue{The solution acts as a low-pass filter on both graphs at once: Observed entries are anchored by the mask, while missing entries are filled with information propagated through $\mathcal{G}_s$ and $\mathcal{G}_m$.}
If $\mathbf{M}=\mathbf{1}_N\mathbf{1}_M^\top$, \eqref{eq:sylvester} is a standard Sylvester equation $\mathbf{AXB}+\mathbf{X}=\mathbf{C}$, solvable in closed form \cite{zhou2012Preconditioned}.
For an arbitrary binary mask, the vectorized form of \eqref{eq:sylvester} is a large-scale linear system of size $NM\times NM$ with Kronecker structure $(\mu\mathbf{L}_m\otimes\mathbf{L}_s + \operatorname{diag}(\operatorname{vec}(\mathbf{M})))\operatorname{vec}(\mathbf{X}) = \operatorname{vec}(\mathbf{M}\circ\mathbf{Y})$.
However, directly solving this system is computationally prohibitive for large $N$ and $M$ due to the $O((NM)^3)$ complexity of naive linear solvers and the $O((NM)^2)$ memory requirement of storing the full system matrix.

Since $\mathbf{L}_s,\mathbf{L}_m\succeq\mathbf{0}$ and $M_{ij}\in\{0,1\}$, the operator $\mathcal{H}(\mathbf{X}) \triangleq \mathbf{M}\circ\mathbf{X}+\mu\mathbf{L}_s\mathbf{X}\mathbf{L}_m$
in \eqref{eq:sylvester} is self-adjoint and positive semidefinite on $\mathbb{R}^{N\times M}$; 
we solve \eqref{eq:sylvester} with the conjugate gradient (CG) method, which is efficient for large-scale linear systems with such properties.

\subsubsection{Conjugate Gradient Solver}
We solve \eqref{eq:sylvester} via the conjugate gradient (CG) algorithm (Algorithm~\ref{alg:sylvester}).
Starting from an initial estimate $\mathbf{X}^{(0)}$ with residual $\mathbf{R}^{(0)}=\mathbf{M}\circ\mathbf{Y}-\mathcal{H}(\mathbf{X}^{(0)})$ and search direction $\mathbf{P}^{(0)}=\mathbf{R}^{(0)}$, the algorithm iterates while updating the step size $\kappa$ and conjugate direction coefficient $\xi$.

CG is well-suited to this sub-problem for two reasons.
First, it exploits the Kronecker structure of the system: The linear operator is applied as matrix products $\mathbf{L}_s\mathbf{X}\mathbf{L}_m$, without forming the full $NM\times NM$ system explicitly.
Second, every operation in Algorithm~\ref{alg:sylvester}---inner products, matrix multiplications, and scalar rescalings---is strictly differentiable, enabling gradient backpropagation through the truncated CG iterations when the solver is later embedded in the unrolled network.

\begin{algorithm}[t]
  \caption{Conjugate Gradient for signal reconstruction \eqref{eq:sylvester}}
  \label{alg:sylvester}
  \begin{algorithmic}
  \REQUIRE $\mathbf{X}^{(0)}$, $\mathbf{R}^{(0)}=\mathbf{M}\circ\mathbf{Y}-\mathcal{H}(\mathbf{X}^{(0)})$, $\mathbf{P}^{(0)}=\mathbf{R}^{(0)}$
  \ENSURE $\mathbf{X}^{(k+1)}$
  \STATE $k\leftarrow0$
  \WHILE{ $\|\mathbf{R}^{(k)}\|_F \ge\epsilon$ }
  \STATE $\mathbf{S}^{(k)}\leftarrow\mathcal{H}(\mathbf{P}^{(k)})$
  \STATE $\kappa^{(k)} \leftarrow \|\mathbf{R}^{(k)}\|_F^2\ /\ \langle\mathbf{P}^{(k)},\mathbf{S}^{(k)}\rangle_F$
  \STATE $\mathbf{X}^{(k+1)} \leftarrow \mathbf{X}^{(k)} + \kappa^{(k)}\mathbf{P}^{(k)}$
  \STATE $\mathbf{R}^{(k+1)} \leftarrow \mathbf{R}^{(k)} - \kappa^{(k)}\mathbf{S}^{(k)}$
  \STATE $\xi^{(k)} \leftarrow \|\mathbf{R}^{(k+1)}\|_F^2\ /\ \|\mathbf{R}^{(k)}\|_F^2$
  \STATE $\mathbf{P}^{(k+1)} \leftarrow \mathbf{R}^{(k+1)} + \xi^{(k)} \mathbf{P}^{(k)}$
  \STATE $k \leftarrow k+1$
  \ENDWHILE
  \end{algorithmic}
\end{algorithm}

\subsection{Twofold Graph Learning}
\label{subsec:graph_update}
\blue{We next derive the graph updates: We first describe the common unsigned formulation, then extend it to signed modality graphs, and present the PDHG solver.}

\subsubsection{Formulation}
With $\mathbf{X}$ and one Laplacian fixed (\eg, $\mathbf{L}_m$ when updating the spatial graph), the sub-problem for the target Laplacian $\mathbf{L}$ is
\begin{equation}
\label{eq:GL_objective}
\min_{\mathbf{L}\in\mathcal{L}} \theta_1\text{tr}({\mathbf{L}}\mathbf{K}) + \theta_2\|\mathbf{L}\|_F^2 - \theta_3\mathbf{1}^\top\log(\text{diag}(\mathbf{L})),
\end{equation}
where $\theta_1$ scales the kernel fitting term, $\theta_2$ controls edge-weight magnitude, and $\theta_3>0$ prevents degenerate zero-degree solutions.
\blue{Intuitively, the first term encourages larger edge weights between nodes with similar variation in $\mathbf{K}$. 
The second term controls the overall magnitude of the edge weights. 
The log-barrier avoids isolated nodes.}
This has the same form as the single-graph learning objective in \cite{dong_learning_2016}; the coupling between the spatial and modality topologies is encoded entirely in the kernel $\mathbf{K}$.

\subsubsection{The Role of the Target Kernel ($\mathbf{K}$)}
The kernel $\mathbf{K}$ is the key to coupling the two graph sub-problems.
When learning the spatial graph, $\mathbf{K}=\mathbf{X}\mathbf{L}_m\mathbf{X}^\top\in\mathbb{R}^{N\times N}$; when learning the modality graph, $\mathbf{K}=\mathbf{X}^\top\mathbf{L}_s\mathbf{X}\in\mathbb{R}^{M\times M}$.
In either case, the $(i,j)$ entry of $\mathbf{K}$,
\begin{equation}
K_{ij} = \mathbf{x}_i^\top\mathbf{L}_{c}\mathbf{x}_j = \sum_{p<q}W^{(c)}_{pq}(X_{ip}-X_{iq})(X_{jp}-X_{jq}),\quad \forall c\in\{s,m\}
\end{equation}
measures the co-variation of signals at nodes $i$ and $j$ as filtered through the complementary graph $\mathbf{L}_c$, mathematically coupling the two topologies.
Expanding the trace term in \eqref{eq:GL_objective} yields,
\begin{align}
    \text{tr}(\mathbf{L}\mathbf{K}) &= \frac{1}{2}\sum_{ij}W_{ij}(K_{ii}+K_{jj}-2K_{ij}) \nonumber \\
    &= \frac{1}{4}\sum_{ij}\sum_{pq}W_{ij}W'_{pq}\bigl[(X_{ip}-X_{jp})-(X_{iq}-X_{jq})\bigr]^2.
\end{align}
Thus, minimizing $\text{tr}(\mathbf{LK})$ enforces a structural regularization on the target weights $W_{ij}$, such that nodes with similar variation on the complementary graphs are encouraged to have large weights between them.

\subsubsection{Continuous Relaxation for Signed Modality Graphs}
\label{subsubsec:signed}
The feasible set $\mathcal{L}$ in \eqref{eq:GL_objective} enforces $W_{ij}\ge0$, which is appropriate for the spatial graph.
For the modality graph, however, unsigned edges cannot represent negative inter-modal correlations.
We therefore extend the formulation to \textit{signed} modality graphs.

We parameterize the signed modality Laplacian as $\bar{\mathbf{L}} = \mathbf{S}\circ\mathbf{L}_m$, where \blue{$\mathbf{S}\in\{-1,+1\}^{M\times M}$ is a symmetric polarity matrix with $S_{ii}=1$ (\eg, $\mathbf{S}=\mathbf{s}\mathbf{s}^\top$ for a two-group polarity vector $\mathbf{s}\in\{-1,+1\}^{M}$)} and $\mathbf{L}_m=\text{diag}(\mathbf{W}_m\mathbf{1})-\mathbf{W}_m$ is a standard non-negative Laplacian encoding the edge magnitudes.
The following proposition shows that learning $\bar{\mathbf{L}}$ reduces to learning an unsigned Laplacian on a sign-modulated kernel, making the problem fully tractable.

\begin{prop}[Trace Equivalence under Hadamard Product]
\label{prop:trace_equiv}
Let $\mathbf{K}\in\mathbb{R}^{M\times M}$ be symmetric and $\mathbf{S}\in\{-1,1\}^{M\times M}$ be a symmetric sign matrix with $S_{ii}=1$ for all $i$. For any positive graph Laplacian $\mathbf{L}_m=\operatorname{diag}(\mathbf{W}_m\mathbf{1})-\mathbf{W}_m$ with $\mathbf{W}_m\ge0$, and $\mathbf{K}_{\mathrm{mag}}=\mathbf{S}\circ\mathbf{K}$:
\begin{equation}
    \label{eq:trace_equiv}
    \operatorname{tr}(\bar{\mathbf{L}}\,\mathbf{K}) = \operatorname{tr}(\mathbf{L}_m\,\mathbf{K}_{\mathrm{mag}}).
\end{equation}
\end{prop}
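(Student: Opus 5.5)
The identity is a direct consequence of the definition $\bar{\mathbf{L}}=\mathbf{S}\circ\mathbf{L}_m$ together with the elementary fact that the trace of a product of two matrices, one of them symmetric, is a Frobenius inner product. I would write
\begin{equation*}
\operatorname{tr}(\mathbf{A}\mathbf{B})=\sum_{i,j}A_{ij}B_{ji},
\end{equation*}
and, since $\mathbf{K}$ is symmetric, this becomes $\sum_{i,j}A_{ij}B_{ij}=\langle \mathbf{A},\mathbf{B}\rangle_F$ whenever $\mathbf{B}=\mathbf{K}$ or $\mathbf{B}=\mathbf{S}\circ\mathbf{K}$. The latter is symmetric because both $\mathbf{S}$ and $\mathbf{K}$ are.

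The key steps are as follows. First, expand the left-hand side: $\operatorname{tr}(\bar{\mathbf{L}}\mathbf{K})=\sum_{i,j}(S_{ij}L^{(m)}_{ij})K_{ji}=\sum_{i,j}S_{ij}L^{(m)}_{ij}K_{ij}$, using $K_{ji}=K_{ij}$. Second, expand the right-hand side: $\operatorname{tr}(\mathbf{L}_m\mathbf{K}_{\mathrm{mag}})=\sum_{i,j}L^{(m)}_{ij}(S_{ji}K_{ji})=\sum_{i,j}L^{(m)}_{ij}S_{ij}K_{ij}$, using the symmetry of both $\mathbf{S}$ and $\mathbf{K}$. The two sums agree term by term, because Hadamard multiplication is commutative and associative entrywise. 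In compact form, the argument is $\langle \mathbf{S}\circ\mathbf{L}_m,\mathbf{K}\rangle_F=\langle \mathbf{L}_m,\mathbf{S}\circ\mathbf{K}\rangle_F$.

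I would then remark on the roles of the hypotheses. $S_{ii}=1$ makes the diagonal of $\bar{\mathbf{L}}$ equal to that of $\mathbf{L}_m$, namely the degrees $d_i=\sum_j W^{(m)}_{ij}=\sum_j|\bar W_{ij}|$. Consequently $\bar{\mathbf{L}}$ has off-diagonal entries $-S_{ij}W^{(m)}_{ij}$, which is exactly the signed Laplacian $\bar{\mathbf{D}}-\bar{\mathbf{W}}$ with $\bar{\mathbf{W}}=\mathbf{S}\circ\mathbf{W}_m$ from the Notation paragraph. The identity itself does not need $\mathbf{W}_m\ge0$ or $S_{ii}=1$; these only justify interpreting $\bar{\mathbf{L}}$ as a signed Laplacian. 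As a consistency check, I would optionally verify that, in edge form, both sides equal $\frac{1}{2}\sum_{ij}W^{(m)}_{ij}(K_{ii}+K_{jj}-2S_{ij}K_{ij})$. This matches the expansion of $\operatorname{tr}(\mathbf{L}\mathbf{K})$ given earlier with $\mathbf{K}$ replaced by $\mathbf{K}_{\mathrm{mag}}$, whose diagonal is unchanged because $S_{ii}=1$.

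There is no real obstacle. The only point needing care is the transpose index in the trace: the step $\sum_{ij}A_{ij}B_{ji}=\sum_{ij}A_{ij}B_{ij}$ requires the symmetry of $\mathbf{K}$, and for the right-hand side also the symmetry of $\mathbf{S}$. I would state these uses explicitly.
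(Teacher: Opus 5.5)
Your proof is correct and is essentially the paper's argument: both rest on moving the Hadamard factor $\mathbf{S}$ across the Frobenius pairing, $\langle \mathbf{S}\circ\mathbf{A},\mathbf{K}\rangle_F=\langle \mathbf{A},\mathbf{S}\circ\mathbf{K}\rangle_F$, using the symmetry of $\mathbf{S}$ and $\mathbf{K}$. The difference is only bookkeeping: the paper expands $\bar{\mathbf{L}}=\operatorname{diag}(\mathbf{W}_m\mathbf{1})-\mathbf{S}\circ\mathbf{W}_m$, applies the swap to the adjacency part, and uses $S_{ii}=1$ on the degree part, whereas you apply the swap to $\mathbf{S}\circ\mathbf{L}_m$ in one step, which correctly shows that $S_{ii}=1$ and $\mathbf{W}_m\ge0$ are needed only to identify $\mathbf{S}\circ\mathbf{L}_m$ with the signed Laplacian $\bar{\mathbf{D}}-\bar{\mathbf{W}}$.
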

\begin{proof}
Expanding $\bar{\mathbf{L}}=\operatorname{diag}(\mathbf{W}_m\mathbf{1})-\mathbf{S}\circ\mathbf{W}_m$:
\begin{equation}
    \operatorname{tr}(\bar{\mathbf{L}}\,\mathbf{K}) = \operatorname{tr}(\operatorname{diag}(\mathbf{W}_m\mathbf{1})\,\mathbf{K}) - \operatorname{tr}((\mathbf{S}\circ\mathbf{W}_m)\mathbf{K}).
\end{equation}
For the adjacency term, since $\mathbf{K}$ and $\mathbf{S}$ are symmetric,
\begin{equation}
    \operatorname{tr}((\mathbf{S}\circ\mathbf{W}_m)\mathbf{K}) = \sum_{i,j}W_{m,ij}S_{ij}K_{ij} = \operatorname{tr}(\mathbf{W}_m\,\mathbf{K}_{\mathrm{mag}}).
\end{equation}
For the degree term, $S_{ii}=1$ implies $(\mathbf{K}_{\mathrm{mag}})_{ii}=K_{ii}$, so
\begin{equation}
    \operatorname{tr}(\operatorname{diag}(\mathbf{W}_m\mathbf{1})\,\mathbf{K}) = \operatorname{tr}(\operatorname{diag}(\mathbf{W}_m\mathbf{1})\,\mathbf{K}_{\mathrm{mag}}).
\end{equation}
Recombining: $\operatorname{tr}(\bar{\mathbf{L}}\,\mathbf{K}) = \operatorname{tr}((\operatorname{diag}(\mathbf{W}_m\mathbf{1})-\mathbf{W}_m)\,\mathbf{K}_{\mathrm{mag}}) = \operatorname{tr}(\mathbf{L}_m\,\mathbf{K}_{\mathrm{mag}})$.
\end{proof}

Proposition~\ref{prop:trace_equiv} reduces the signed graph learning problem to the standard unsigned objective \eqref{eq:GL_objective} applied to the modulated kernel $\mathbf{K}_{\mathrm{mag}}=\mathbf{S}\circ\mathbf{K}$, provided that the sign matrix $\mathbf{S}$ is estimated.

Obtaining the exact bipartite partition $\mathbf{S}$ from data is in general NP-hard\blue{. In the literature, the sign matrix is often estimated from pairwise signal distances \cite{dittrich_learning_2020}, which is sensitive to noise and local discrepancies.}
\blue{In our method, we instead estimate the sign structure from the dominant eigenspace of $\mathbf{K}$, which captures the global polarity structure of the modalities.}
\blue{Specifically}, we estimate a continuous approximation via subspace iteration.
Let $\mathbf{P}\in\mathbb{R}^{M\times r}$ collect the $r$ leading eigenvectors of $\mathbf{K}$, and let $\tilde{\mathbf{P}}$ be obtained by row-normalizing $\mathbf{P}$.
We form the rank-$r$ approximation of $\mathbf{S}$ as the cosine similarity matrix of $\tilde{\mathbf{P}}$:
\begin{equation}
    \label{eq:spectral_sign}
    \mathbf{C} = \tilde{\mathbf{P}}\tilde{\mathbf{P}}^\top \in [-1,1]^{M\times M},
\end{equation}
which serves as a continuous surrogate for $\mathbf{S}$: The entry $C_{ij}=\cos(\theta_{ij})$ is the cosine similarity between nodes $i$ and $j$ in the dominant spectral subspace of $\mathbf{K}$, capturing the globally consistent polarity structure while suppressing local noise.
\blue{The idea is analogous to spectral clustering \cite{vonluxburg2007tutorial}: For a nearly balanced signed structure, the leading eigenvectors of $\mathbf{K}$ separate the modalities into anti-correlated groups. 
This global perspective makes our approach more robust to noisy and incomplete data.}
Replacing $\mathbf{S}$ with $\mathbf{C}$ in \eqref{eq:trace_equiv}, the modulated kernel becomes $\mathbf{K}_{\mathrm{mag}} = \mathbf{C}\circ\mathbf{K}$.

\subsubsection{Primal-Dual Hybrid Gradient Solver}
\label{subsubsec:pdhg}
We solve \eqref{eq:GL_objective}, where $\mathbf{K}$ replaced by $\mathbf{K}_{\mathrm{mag}}$ for the modality graph, by reparametrizing the graph Laplacian matrix with the adjacency matrix $\mathbf{L}=\text{diag}(\mathbf{W}\mathbf{1})-\mathbf{W}$:
\begin{align}
\label{eq:W_opt}
    \min_{\mathbf{W}\in\mathcal{W}} \theta_1(\mathbf{d}^\top\text{diag}(\mathbf{K})-\text{tr}(\mathbf{W}\mathbf{K})) + \theta_2(\|\mathbf{d}\|_2^2+\|\mathbf{W}\|_F^2) - \theta_3\mathbf{1}^\top\log(\mathbf{d})\nonumber\\
    \text{s.t.}\;  \mathbf{d} = \mathbf{W}\mathbf{1}.
\end{align}
Introducing a dual variable $\mathbf{u}$ to decouple the degree constraint $\mathbf{d}=\mathbf{W}\mathbf{1}$, the problem becomes finding the saddle point of
\begin{equation}
\label{eq:saddle_point_problem}
    \mathcal{F}(\mathbf{W},\mathbf{d},\mathbf{u}) = f(\mathbf{W}) + h(\mathbf{d}) + \mathbf{u}^\top(\mathbf{W}\mathbf{1}-\mathbf{d}),
\end{equation}
where \blue{$f(\mathbf{W})=-\theta_1\text{tr}(\mathbf{W}\mathbf{K})+\theta_2\|\mathbf{W}\|_F^2$} is the smooth primal objective, \blue{$h(\mathbf{d})=\theta_1\mathbf{d}^\top\text{diag}(\mathbf{K})+\theta_2\|\mathbf{d}\|^2_2-\theta_3\mathbf{1}^\top\log(\mathbf{d})$} is the degree-specific regularization.
\blue{By splitting $\mathbf{d}$ as a separate variable, the log-barrier reduces to an element-wise problem with a closed-form solution. 
The dual variable $\mathbf{u}$ ensures $\mathbf{d}$ is consistent with $\mathbf{W}\mathbf{1}$.}
We solve \eqref{eq:saddle_point_problem} via an accelerated primal-dual hybrid gradient (PDHG) method (Algorithm~\ref{alg:GL_pdhg}) as follows.

\begin{enumerate}
    \item \textit{Primal Descent (Adjacency Matrix Update):}
        The adjacency matrix is updated with a gradient step on $f$ followed by projection onto $\mathcal{W}$:
        \begin{equation}
        \label{eq:proj_op}
            \mathbf{W}^{(k+1)}=\text{Proj}_\mathcal{W}\!\left(\mathbf{W}^{(k)}-\tau\!\left(\nabla f(\mathbf{W}^{(k)})+\mathcal{S}^\ast(\mathbf{u}^{(k)})\right)\right),
        \end{equation}
        where $\tau$ is the primal step-size, $\mathcal{S}^\ast(\mathbf{u})=\mathbf{u}\mathbf{1}^\top+\mathbf{1}\mathbf{u}^\top$ is the adjoint of the degree operator $\mathcal{S}(\mathbf{W})=\mathbf{W}\mathbf{1}=\mathbf{d}$, and $\text{Proj}_\mathcal{W}$ enforces non-negativity, symmetry, and a zero diagonal.
        Since \blue{$f(\mathbf{W})=\theta_2\|\mathbf{W}\|_F^2-\theta_1\text{tr}(\mathbf{W}\mathbf{K})$}, its gradient is
        \begin{equation}
        \label{eq:grad_f}
            \blue{\nabla f(\mathbf{W}) = 2\theta_2\mathbf{W} - \theta_1\mathbf{K}.}
        \end{equation}
    \item \noindent\textit{Extrapolation:}
        To achieve an $\mathcal{O}(1/k)$ convergence rate, we extrapolate the updated adjacency matrix:
        \begin{align}
            \bar{\mathbf{W}} = 2\mathbf{W}^{(k+1)} - \mathbf{W}^{(k)}, \qquad \bar{\mathbf{d}}^{(k+1)} = \bar{\mathbf{W}}\mathbf{1}.
        \end{align}
    \item \noindent\textit{Dual Ascent (Degree Update):}
        Setting $\mathbf{v}=(\mathbf{u}^{(k)}+\sigma\bar{\mathbf{d}}^{(k+1)})/\sigma$ and applying Moreau's identity, the dual update reduces to the element-wise separable proximal problem
        \begin{equation}
        \label{eq:dual_ascent_obj}
            \blue{\min_{d_i}\!\left(\theta_2 d_i^2 + \theta_1 K_{ii}d_i-\theta_3\log(d_i)+\frac{1}{2\sigma}(d_i-v_i)^2\right),}
        \end{equation}
        where $v_i = u_i/\sigma + \bar{d}_i$. This is a one-dimensional strongly convex problem with the closed-form solution
        \begin{equation}
            \blue{d_i^{(k+1)} = \frac{v_i-\sigma\theta_1 K_{ii}+\sqrt{(\sigma\theta_1 K_{ii}-v_i)^2+4\sigma\theta_3(1+2\sigma\theta_2)}}{2(1+2\sigma\theta_2)},}
        \end{equation}
        after which the dual variable is updated as:
        \begin{equation}
            \mathbf{u}^{(k+1)}=\mathbf{v}-\sigma\mathbf{d}^{(k+1)}.
        \end{equation}
\end{enumerate}

\begin{algorithm}[t]
    \caption{Primal-Dual Hybrid Gradient for Graph Learning}
    \label{alg:GL_pdhg}
    \begin{algorithmic}
    \REQUIRE $\mathbf{K}$ (or $\mathbf{K}_{\mathrm{mag}}$ for the modality graph), $\mathbf{W}^{(0)}$, $\mathbf{u}^{(0)}$
    \ENSURE $\mathbf{L}$
    \WHILE{ $\|\mathbf{W}^{(k+1)} - \mathbf{W}^{(k)}\|_F/\|\mathbf{W}^{(k)}\|_F\blue{>}\epsilon_{g}$ }
    \STATE $\mathbf{W}^{(k+1)} \leftarrow \text{Proj}_{\mathcal{W}}\!\left( \mathbf{W}^{(k)} - \tau \left( \blue{2\theta_2\mathbf{W}^{(k)} - \theta_1\mathbf{K}} + \mathcal{S}^*(\mathbf{u}^{(k)}) \right)\right)$
    \STATE $\bar{\mathbf{W}} \leftarrow 2 \mathbf{W}^{(k+1)} - \mathbf{W}^{(k)}$
    \STATE $\mathbf{v} \leftarrow (\mathbf{u}^{(k)}+\sigma\bar{\mathbf{W}}\mathbf{1})/\sigma$
    \STATE $d_{i}^{(k+1)} \leftarrow\blue{\frac{v_i-\sigma\theta_1 K_{ii}+\sqrt{(\sigma\theta_1 K_{ii}-v_i)^2+4\sigma\theta_3(1+2\sigma\theta_2)}}{2(1+2\sigma\theta_2)}}$
    \STATE $\mathbf{u}^{(k+1)} \leftarrow (\mathbf{u}^{(k)}+\sigma\bar{\mathbf{W}}\mathbf{1}) - \sigma\mathbf{d}^{(k+1)}$
    \ENDWHILE
    \STATE $\mathbf{d}_{\text{final}} \leftarrow \mathbf{W}^{(k+1)} \mathbf{1}$
    \STATE $\mathbf{L} \leftarrow \text{diag}(\mathbf{d}_\text{final})-\mathbf{W}^{(k+1)}$
    \end{algorithmic}
\end{algorithm}

\subsection{Computational Complexity}
\label{subsec:complexity}

We summarize the per-layer cost of each module in terms of $N$ (spatial nodes) and $M$ (modalities).

\begin{enumerate}
    \item \textit{Signal Reconstruction:}
    Each iteration applies the operator $\mathbf{L}_s\mathbf{P}^{(k)}\mathbf{L}_m$ via two successive matrix products, costing $O(N^2M+NM^2)$ per iteration and $O(N^2M+NM^2)$ total for a fixed number of steps.

    \item \textit{Kernel and Polarity Estimation:}
    Computing the kernel \blue{($\mathbf{K}=\mathbf{X}^\top\mathbf{L}_s\mathbf{X}\in\mathbb{R}^{M\times M}$ or $\mathbf{K}=\mathbf{X}\mathbf{L}_m\mathbf{X}^\top\in\mathbb{R}^{N\times N}$) costs} $O(NM(N+M))$, and the polarity surrogate $\mathbf{C}=\tilde{\mathbf{P}}\tilde{\mathbf{P}}^\top$ via subspace iteration \blue{costs} $O(M^2)$.

    \item \textit{Graph Learning:}
    Each iteration in the proposed PDHG algorithm updates \blue{an $N\times N$ (spatial) or $M\times M$ (modality) adjacency matrix in $O(N^2)$ or $O(M^2)$, respectively}.
\end{enumerate}
The \blue{overall} computational complexity is dominated by the kernel computation and the CG operator, giving $O(N^2M+NM^2)$.

\subsection{Deep Algorithm Unrolling}
\label{subsec:unrolling}
\blue{Finally, we assemble the solvers described above into a trainable feedforward network.}

\subsubsection{Network Architecture}
The alternating minimization over $(\mathbf{X},\mathbf{L}_s,\mathbf{L}_m)$ directly defines the forward pass of the unrolled network.
Each outer layer $l\in\{1,\dots,L\}$ comprises one full alternating cycle with three sub-modules executed in sequence:
\begin{enumerate}
    \item[(i)] \textbf{Modality Graph Module.} Compute $\mathbf{K}=\mathbf{X}^{(l-1)\top}\mathbf{L}_s^{(l-1)}\mathbf{X}^{(l-1)}$, form $\mathbf{K}_{\mathrm{mag}}=\mathbf{C}\circ\mathbf{K}$, and run $T$ PDHG inner iterations (Algorithm~\ref{alg:GL_pdhg}) with $\mathbf{K}_{\mathrm{mag}}$ to produce $\mathbf{L}_m^{(l)}$.
    \item[(ii)] \textbf{Spatial Graph Module.} Run $T$ PDHG inner iterations with unsigned kernel $\mathbf{K}=\mathbf{X}^{(l-1)}\mathbf{L}_m^{(l)}\mathbf{X}^{(l-1)\top}$ to produce $\mathbf{L}_s^{(l)}$.
    \item[(iii)] \textbf{Signal Reconstruction Module.} Run $K$ CG inner iterations (Algorithm~\ref{alg:sylvester}) with the updated pair $(\mathbf{L}_s^{(l)},\mathbf{L}_m^{(l)})$ to produce $\mathbf{X}^{(l)}$.
\end{enumerate}
The output of the final outer layer, $\mathbf{X}^{(L)}$, is the reconstructed signal.
We show the overview of the unrolled network architecture in Fig.~\ref{fig:network_architecture}.

\begin{figure}[t]
    \centering
    \includegraphics[width=0.95\textwidth]{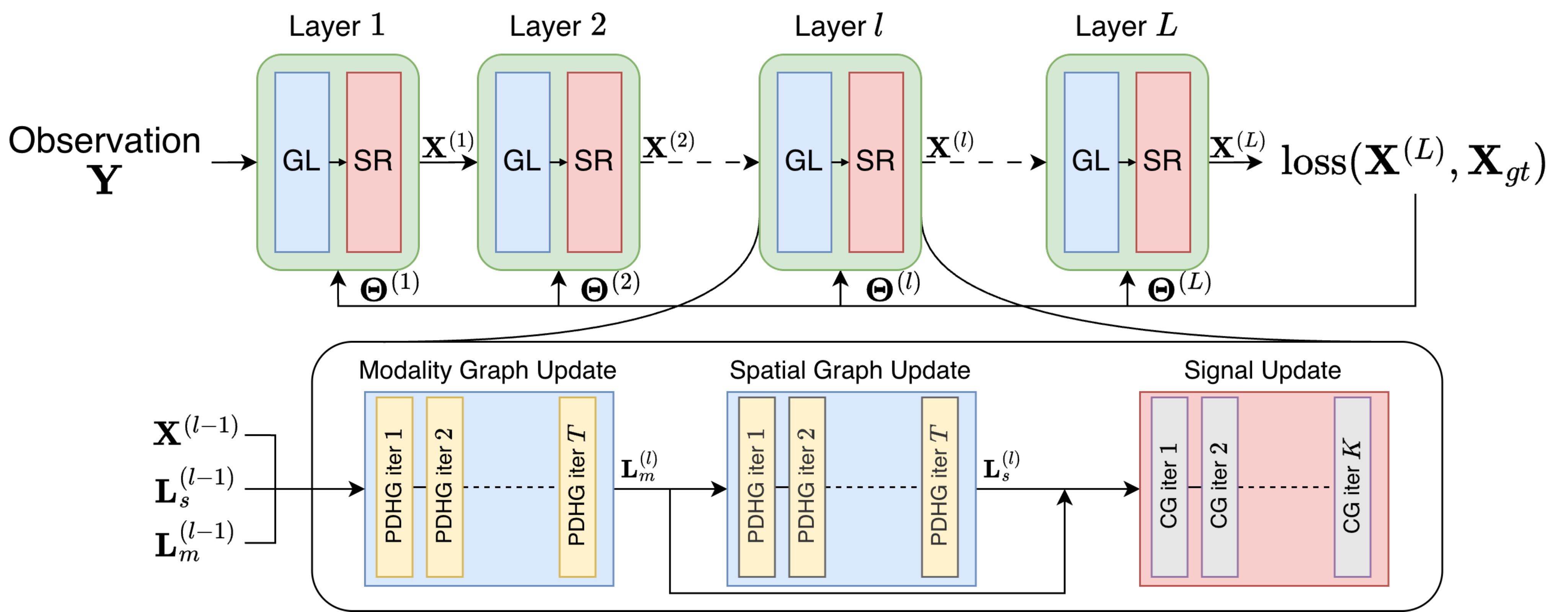}
    \caption{Overview of the unrolled network architecture. Each outer layer consists of a modality graph module, a spatial graph module, and a signal reconstruction module executed in sequence.}
    \label{fig:network_architecture}
\end{figure}

\subsubsection{Learnable Parameters}
By fixing the number of inner iterations and promoting all algorithmic hyperparameters to layer-wise trainable weights, we obtain the following parameter sets.
For the two graph learning modules at outer layer $l$:
\begin{equation}
    \mathbf{\Theta}_{\mathrm{GL}}^{(l)} := \left\{ \bigl\{ \theta_1^{(t,l)}, \theta_2^{(t,l)}, \theta_3^{(t,l)}, \tau^{(t,l)}, \sigma^{(t,l)} \bigr\}_c \;\middle|\; c \in \{s, m\}, t = 1, \dots, T \right\},
\end{equation}
Here, $\theta_1$ is a learnable scale for the kernel fitting term, and $\theta_2, \theta_3$ are the regularization weights in \eqref{eq:GL_objective}. $\tau, \sigma$ are the PDHG primal and dual step-sizes in Algorithm~\ref{alg:GL_pdhg}. 
Each parameter is learned separately for the spatial and modality modules.
For the signal reconstruction module at outer layer $l$:
\begin{equation}
    \mathbf{\Theta}_{\mathrm{SR}}^{(l)} := \bigl\{\kappa^{(k,l)},\,\xi^{(k,l)},\,\mu^{(l)}\bigr\}_{k=1}^{K},
\end{equation}
where $\kappa^{(k)},\xi^{(k)}$ are the per-iteration CG step size and conjugate direction coefficient in Algorithm~\ref{alg:sylvester}, and $\mu^{(l)}$ is the twofold-graph regularization weight in \blue{\eqref{eq:signal_restore}}.
Here $T$ and $K$ are the numbers of inner PDHG and CG iterations, respectively.
The full trainable parameter set is $\mathbf{\Theta}=\{\mathbf{\Theta}_{\mathrm{GL}}^{(l)},\,\mathbf{\Theta}_{\mathrm{SR}}^{(l)}\}_{l=1}^{L}$.
\blue{In other words, the unrolled network is the iterative solver with trainable regularization parameters and step sizes that vary per layer.
Training learns when to tighten or loosen the regularization strength in a data-driven manner rather than a black-box mapping of inputs to outputs. 
This keeps the parameter count small, and every learned value remains interpretable.}

\subsubsection{End-to-End Training}
Given training pairs $\{(\mathbf{Y}^{(n)},\mathbf{X}_{\mathrm{gt}}^{(n)})\}_{n=1}^{N_{\mathrm{train}}}$, we minimize the mean squared error on the missing entries:
\begin{equation}
    \mathcal{J}(\mathbf{\Theta}) = \frac{1}{NM}\,\mathbb{E}_n\!\left[\|\mathbf{M}^c\circ(\mathbf{X}^{(L)}-\mathbf{X}_{\mathrm{gt}})\|_F^2\right],
\end{equation}
where $\mathbf{M}^c=\mathbf{1}\mathbf{1}^\top-\mathbf{M}$ selects only the unobserved entries.
Since every component of the forward pass---CG iterations, PDHG updates---is (sub-)differentiable, the network is trained end-to-end via backpropagation.
The overall number of trainable parameters is $O(L(T+K))$, which is independent of the signal dimensions $N$ and $M$, making the method scalable to large graphs.

\section{Experiments}
\label{sec:experiments}
In this section, we evaluate the proposed method on multimodal signal denoising and interpolation against competing baselines in two settings: (i) synthetic multimodal graph signals with varying noise levels and missing rates, and (ii) interpolation of \blue{two real-world datasets}.

We fix the number of inner PDHG iterations to $T=5$ and CG iterations to $K=5$ in all experiments.
The baselines include the following methods:
\begin{enumerate}
    \item Model based methods: graph low-pass filtering (GLF), singular value decomposition (SVD)
    \item Data-driven methods: autoencoder (AE) \cite{vincent2010stacked}, graph convolutional network (GCN) \cite{kipf2017semisupervised}
    \item Algorithm-unrolling methods: TGSR-DAU \cite{nagahama2022multimodal}, LLAP-DAU \cite{kojima2026unrolling}\blue{, NestDAU \cite{nagahama2022nested}, and joint inpainting and graph learning (IDGL) \cite{batreddy2025inpainting}}
\end{enumerate}
TGSR-DAU is a recent unrolling method for multimodal graph signal recovery, which requires a pair of pre-specified spatial graph and modality graph. LLAP-DAU is an unrolling method that learns twofold graphs from observations, but \blue{the learned graphs are restricted to be unsigned}.
\blue{The GCN baseline uses the renormalized adjacency with self-loops \cite{kipf2017semisupervised}; since the propagation can be performed either on the spatial graph or on the modality graph, we report both variants.
NestDAU restores each modality independently: Each column is treated as a single graph signal on a six-nearest-neighbor spatial graph constructed from the observations.
IDGL learns a single unsigned spatial graph Laplacian together with the signal, using a difference prior along the columns; since the columns of the synthetic datasets are unordered modalities, the difference operator is applied across all columns.}

All methods are implemented in PyTorch and trained on a NVIDIA RTX 5060 GPU.

\subsection{\blue{Synthetic Datasets}}
\blue{We first evaluate on synthetic data, where the ground-truth signals and graphs are known.}

We compare the performance of the proposed method against baselines with \blue{three} types of synthetic datasets:
\begin{enumerate}
    \item Matrix Normal dataset: Signals are sampled from a matrix normal distribution with synthetic precision matrices
    \item Piecewise Smooth dataset: Signals are generated with smooth spatial patterns and piecewise smooth modality variations.
    \item \blue{Signed Piecewise Smooth dataset: Signals are piecewise smooth over modality communities whose inter-community correlations are governed by a signed modality graph.}
\end{enumerate}
Specifically, \blue{each} synthetic dataset is generated in two stages.
First, we create a twofold graph. Then, we \blue{generate} the clean signal $\mathbf{X}$ \blue{based on the twofold graph}, and add i.i.d. Gaussian noise to obtain the noisy observation $\mathbf{Y}$.

The spatial graph $\mathcal{G}_s$ is a six-nearest neighbor graph of $N$ nodes randomly placed in a 2-D space $[0,1]\times [0,1]$. The modality graph $\mathcal{G}_m$ is a community graph with $M$ nodes randomly assigned to six communities. Each community in the modality graph corresponds to a single modality. 
Finally, an i.i.d. noise matrix $\mathbf{E}\in\mathbb{R}^{N\times M}$ is generated with $E_{ij}\sim\mathcal{N}(0,\sigma^2)$, where we control $\sigma^2$ to achieve four levels of signal to noise ratio (SNR) in dB: $[5,\ 10,\ 15,\ 20]$.
The observation mask matrix $\mathbf{M}\in\mathbb{R}^{N\times M}$ for interpolation is generated with five different missing rates in $\%$: $[10,\ 20,\ 30,\ 40,\ 50]$.
We visualize the sample signals from the \blue{three} datasets in Fig.~\ref{fig:synth_sample}. 

\blue{We describe how the modality graph and the corresponding clean signals are generated for each of the three datasets below.}

\subsubsection{Matrix Normal Dataset} 
We consider the stochastic block model \cite{holland_stochastic_1983} for the connectivity of the modality graph: Intra-community edges and inter-community edges are generated with probability $p=0.3$ and $q=0.5$, respectively. The magnitudes of edge weights are drawn uniformly from $[0.1,1.0]$. 
Then, the communities are split randomly into two groups, where edge weights between the communities within the same groups remain positive, and edge weights between communities across groups are multiplied by $-1$.
A sample matrix observation $\mathbf{X}\in\mathbb{R}^{N\times M}$ is drawn from the distribution $\mathcal{MN}(\mathbf{0},\mathbf{L}_s^\dagger,\mathbf{L}_m^\dagger)$.

\subsubsection{Piecewise Smooth Dataset} 
We sample six base spatial signals $\mathbf{x}_{j}\sim\mathcal{N}(\mathbf{0},\mathbf{L}_s^\dagger)$, one for each modality community $j\in\{1,\ldots,6\}$. 
The $m$th spatial signal for community $j$ is then generated as
\begin{equation}
    \mathbf{x}_{j,m} = \mathbf{x}_{j} + \phi_{j}(t_m)\mathbf{1},
\end{equation}
where $t_m \in [0,1]$ is the normalized modality node index. The additive signal $\phi_{j}(t_m)$ is a sinusoidal function introducing anti-phase oscillations between adjacent communities:
\begin{equation}
    \phi_{j}(t_m) = \begin{cases}
        \frac{1}{4}\sin(\frac{\pi}{2}t_m), & \text{if $j$ is odd}, \\
        \frac{1}{4}\sin(\frac{\pi}{2}t_m-\pi), & \text{if $j$ is even}.
    \end{cases}
\end{equation}

\blue{Note that the negative inter-modal correlations of the piecewise smooth dataset are uniform across community pairs rather than tied to specific signed edges.}

\subsubsection{Signed Piecewise Smooth Dataset}
\blue{To directly test the signed-graph contribution, we additionally introduce a dataset whose modality graph carries inter-community negative correlations that are explicitly tied to the signal generation process.}
\blue{Each pair of the six communities is joined with probability $0.5$, and each community receives a polarity $\sigma_j\in\{-1,+1\}$: The ground-truth edge weights of the modality graph are positive within communities and between joined communities with the same polarity, negative between joined communities with different polarities, and zero if they are disconnected.
Accordingly, the base signals $\tilde{\mathbf{x}}_{j}$, each marginally distributed as $\mathcal{N}(\mathbf{0},\mathbf{L}_s^\dagger)$, are drawn jointly so that their cross-community correlations match the ground-truth edge weights:
\begin{equation}
    \operatorname{corr}(\tilde{\mathbf{x}}_{j},\tilde{\mathbf{x}}_{j'}) =
    \begin{cases}
        \rho\,\sigma_j\sigma_{j'}, & \text{if communities $j$ and $j'$ are joined},\\
        0, & \text{otherwise},
    \end{cases}
\end{equation}
where $\rho=0.6$.
The additive offset $\phi_{j}(t_m)\mathbf{1}$ is replaced by a multiplicative modulation with polarity-dependent phase:
\begin{equation}
    \mathbf{x}_{j,m} = \bigl(1+\gamma\sin(\pi t_m + \varphi_j)\bigr)\,\tilde{\mathbf{x}}_{j},
\end{equation}
where $\gamma=0.25$, and $\varphi_j=0$ if $\sigma_j=+1$ and $\varphi_j=\pi$ otherwise.}

\begin{figure}
    \centering
    \subfigure[Matrix normal dataset]{\includegraphics[width=0.32\linewidth]{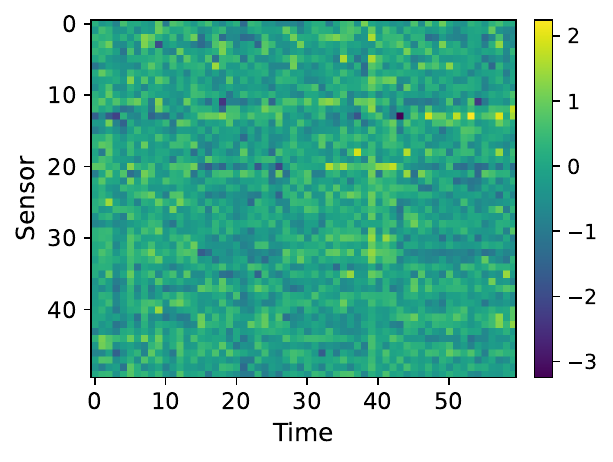}}
    \subfigure[Piecewise smooth dataset]{\includegraphics[width=0.32\linewidth]{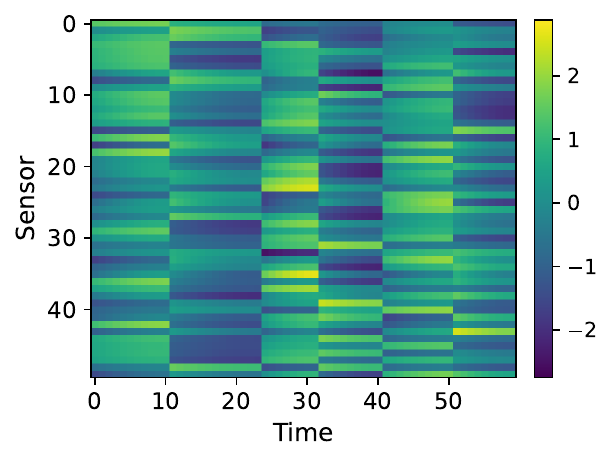}}
    \subfigure[\blue{Signed piecewise smooth dataset}]{\includegraphics[width=0.32\linewidth]{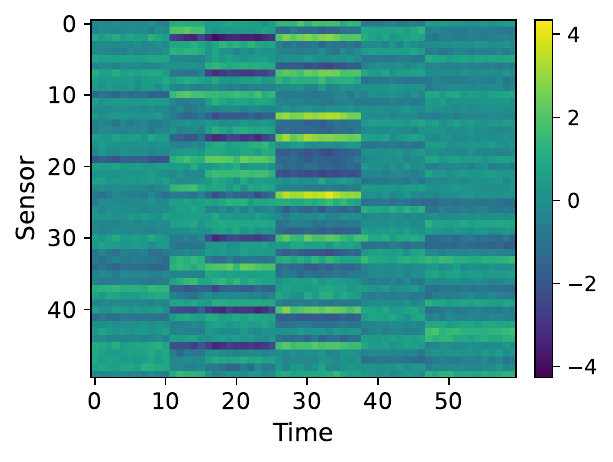}}
    \caption{Sample signals from the \blue{three} synthetic datasets.}
    \label{fig:synth_sample}
\end{figure}

\subsection{Experimental Setup and Results}

\subsubsection{Experimental Setup}
We generate five sets of twofold graphs with $N=50$ and $M=60$, and sample $T=20$ signals for each twofold graph to perform five-fold cross-validation. We set samples from one twofold graph as the validation set, and use the remaining samples as the training set at each cross-validation fold. 
For baselines requiring a known graph (TGSR-DAU), we constructed a six-nearest neighbor graph from the noisy observation $\mathbf{Y}$. 

All methods are trained for 50 epochs with the Adam optimizer at a learning rate of $0.01$. For all unrolled methods (AE, TGSR-DAU, and the proposed method), the number of outer layers is set to $L=3$, as performance did not improve significantly beyond three layers.
\blue{The optimization parameters of the unrolled methods, including the regularization weights and step sizes of the proposed method, are not hand-tuned but trained from data (Section~\ref{subsec:unrolling}).
The remaining hyperparameters, such as the learning rate and the numbers of outer layers and inner iterations, were selected empirically, since no significant improvement was observed in preliminary experiments beyond the reported settings.
For the training-free methods, the parameters were selected via grid search on the training data.}
To evaluate each component of the proposed method, we also perform an ablation study. For ablation, we compare \blue{the following} variations of the proposed model:
\begin{itemize}
    \item \textbf{Fixed modality graph}: Inspired by \cite{jiang2021sensor}, we fix the modality graph as a weighted path graph, which only models the smoothness individually per spatial node. The number of trainable parameters is $O(LT)$.
    \item \textbf{Without graph learning}: Inspired by \cite{nagahama2022multimodal}, we compute the twofold graph from the input as a fixed nearest-neighbor graph, and use it across the layers. The number of trainable parameters is $O(L)$.
    \item \textbf{Iterative solver}: The proposed method without unrolling, with the regularization parameters tuned via grid-search to achieve the best MSE for denoising on data with SNR $10$ dB.
    \item \blue{\textbf{Unsigned modality graph}: The sign modulation is disabled by replacing $\mathbf{C}$ with the all-ones matrix; the architecture and parameter count are otherwise identical to the proposed model.}
    \item \blue{\textbf{Pairwise signs}: The polarity is estimated element-wise as $\operatorname{sign}(K_{ij})$, analogous to the pairwise rule in \cite{dittrich_learning_2020}.}
    \item \blue{\textbf{Rank-1 spectral signs}: The polarity is $\mathbf{S}=\mathbf{s}\mathbf{s}^\top$, where $\mathbf{s}$ is the entrywise sign of the principal eigenvector of $\operatorname{sign}(\mathbf{K})$.}
\end{itemize}
\blue{The last three variants differ from the proposed model only in the modality sign pattern.}

\blue{Statistical significance is assessed with two paired tests against the proposed method: a Wilcoxon signed-rank test over paired per-sample masked MSEs (same fold, evaluation sample, and mask), reported per missing rate and pooled over the five rates, and a paired t-test over per-fold mean MSEs. All p-values are Holm--Bonferroni-adjusted across the compared methods within a table.
The outcomes of the pooled Wilcoxon test are reported directly in the captions of Tables~\ref{tab:synth_mn_result}--\ref{tab:ablation}.}

\subsubsection{Results}
Tables~\ref{tab:synth_mn_result}\blue{, \ref{tab:synth_pws_result}, and \ref{tab:synth_pwcs_result}} report the average MSE at 20~dB SNR. We also show the number of \blue{trainable} parameters for each method in the last column.
The proposed method achieves the lowest error \blue{among all competing methods} on \blue{all three} datasets across all missing rates.
\blue{The only exception is its own iterative variant at a missing rate of 40\% on the piecewise smooth dataset, where the difference is marginal (0.033 vs. 0.036).
At 20 dB SNR, the proposed method reduces the average masked MSE relative to the best competing method by 12\% on the matrix normal dataset, 83\% on the piecewise smooth dataset, and 74\% on the signed piecewise smooth dataset.}
Figure~\ref{fig:mse_vs_noise} extends this comparison across SNR levels, confirming that the proposed method consistently outperforms all baselines over the full noise range.

From the ablation, we observe that fixing the modality graph to a path graph yields competitive performance on the piecewise smooth dataset, where a path graph is a reasonable prior approximation. However, the performance degrades substantially when the signal distribution deviates from that assumption for matrix normal dataset (Fig.~\ref{fig:mse_vs_noise}(a)). 
Disabling graph learning leads to a larger drop on the piecewise smooth dataset (Table~\ref{tab:synth_pws_result}), confirming that adaptive graph estimation is essential for signals with complex inter-modal structure. 
The iterative solver without unrolled training achieves competitive accuracy but tends to oversmooth under high noise (Fig.~\ref{fig:mse_vs_noise}), motivating end-to-end learning of the regularization parameters.

\blue{Table~\ref{tab:ablation} summarizes the ablation over the sign configurations, and Table~\ref{tab:sign_estimators} reports the measured edge-sign accuracy of the corresponding sign estimators, where the accuracy of the unsigned estimator equals the fraction of positive edges in the ground truth.
As we can see, the optimal sign configuration, including the unsigned one, depends on the nature of the modality correlations in the given data.
We adopt the spectral estimator in \eqref{eq:spectral_sign} with rank $r=2$ as the default owing to its robustness across the three datasets, while the sign estimation module can easily be replaced according to the application.}

\begin{figure}
    \centering
    \subfigure[Matrix normal]{\includegraphics[width=0.32\linewidth]{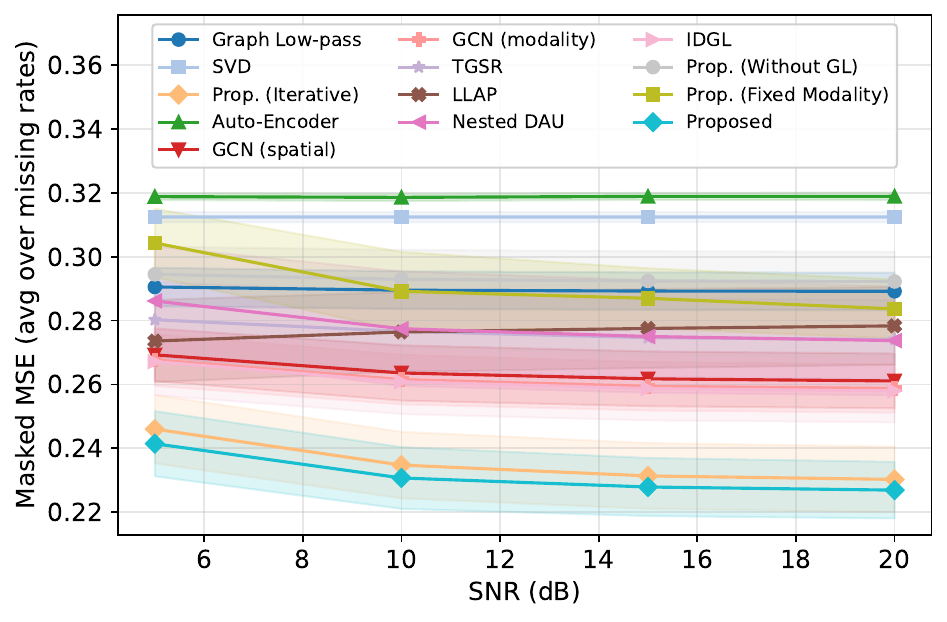}}
    \subfigure[Piecewise smooth]{\includegraphics[width=0.32\linewidth]{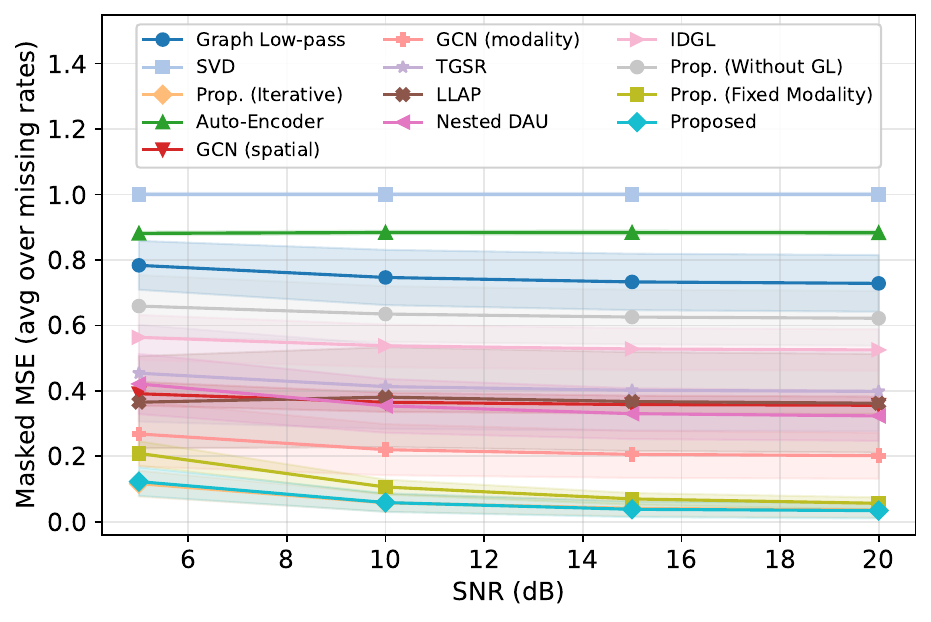}}
    \subfigure[Signed piecewise smooth]{\includegraphics[width=0.32\linewidth]{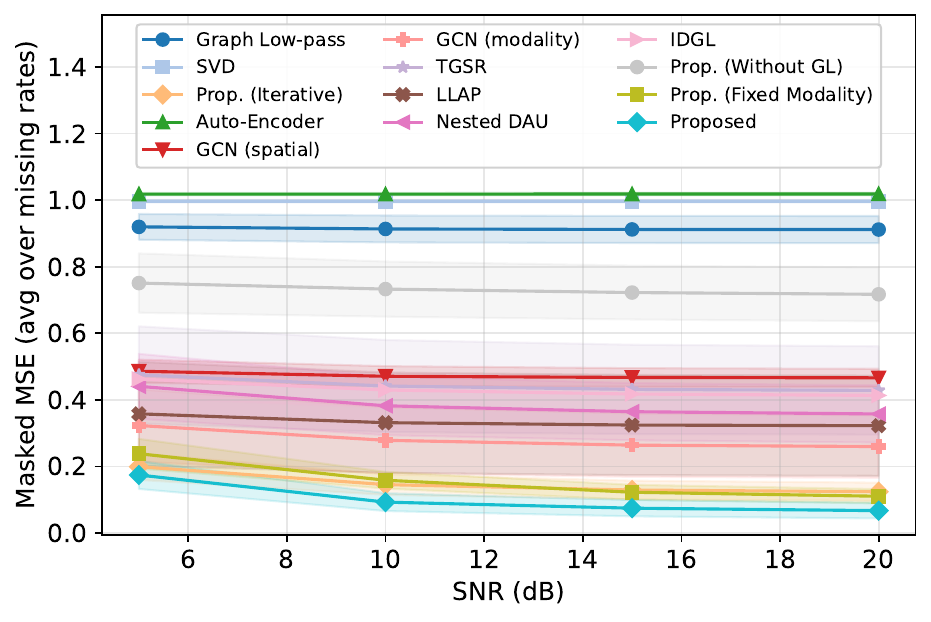}}
    \caption{\blue{Masked MSE as a function of SNR. Values are averaged over all folds and missing rates; shaded regions indicate cross-fold standard deviations.}}
    \label{fig:mse_vs_noise}
\end{figure}

\begin{table}[tb]
\centering
\caption{Average MSE on Matrix Normal Dataset at 20 dB. \blue{All differences from the proposed method are statistically significant ($p<0.05$)}}
\vspace{-8pt}
\label{tab:synth_mn_result}
\resizebox{\linewidth}{!}{%
\begin{tabular}{@{}clcccccc@{}}
\toprule
&Methods\textbackslash Missing rate (\%) & 10 & 20 & 30 & 40 & 50 & Params. \\
\midrule \midrule
\multirow[c]{3}{*}{Model-based}&Graph Low-pass           & $0.281$   & $0.285$   & $0.289$   & $0.293$   & $0.297$  & 1 \\
&SVD                      & $0.310$   & $0.311$   & $0.312$   & $0.314$   & \blue{$0.315$} & 1 \\
&Proposed (Iterative)     & $0.217$   & $0.224$   & $0.228$   & $0.236$   & $0.246$  & 7  \\\midrule
\multirow[c]{3}{*}{Data-driven}&Auto-Encoder             & \blue{$0.318$}   & $0.318$   & $0.319$   & $0.320$   & $0.320$  & 298392  \\
&\blue{GCN (spatial)}     & \blue{$0.252$}   & \blue{$0.254$}   & \blue{$0.259$}   & \blue{$0.266$}   & \blue{$0.275$}  & \blue{31036} \\
&\blue{GCN (modality)}    & \blue{$0.250$}   & \blue{$0.253$}   & \blue{$0.256$}   & \blue{$0.263$}   & \blue{$0.271$}  & \blue{25906} \\\midrule
\multirow[c]{6}{*}{Unrolled}&TGSR                     & $0.256$   & $0.266$   & $0.274$   & $0.283$   & $0.291$ & 18   \\
& LLAP & \blue{$0.260$}   & \blue{$0.271$}   & \blue{$0.279$}   & \blue{$0.288$}   & \blue{$0.295$}  & 18 \\
&\blue{NestDAU}           & \blue{$0.252$}   & \blue{$0.262$}   & \blue{$0.271$}   & \blue{$0.283$}   & \blue{$0.301$}  & \blue{16} \\
&\blue{IDGL}              & \blue{$0.244$}   & \blue{$0.252$}   & \blue{$0.257$}   & \blue{$0.264$}   & \blue{$0.273$}  & \blue{3} \\
&Prop. (Fixed Modality)   & $0.270$   & $0.278$   & $0.284$   & $0.291$   & $0.296$  & 108 \\
&Prop. (Without GL)       & $0.278$   & $0.286$   & $0.293$   & $0.299$   & $0.305$  & 7 \\
&Proposed                 & $\bf0.215$   & $\bf0.221$   & $\bf0.225$   & $\bf0.234$   & $\bf0.240$  & 183  \\
\bottomrule
\end{tabular}%
}
\end{table}

\begin{table}[tb]
\centering
\caption{Average MSE on Piecewise Smooth Dataset at 20 dB. \blue{All differences from the proposed method are statistically significant ($p<0.05$)}}
\vspace{-8pt}
\label{tab:synth_pws_result}
\resizebox{\linewidth}{!}{%
\begin{tabular}{@{}clcccccc@{}}
\toprule
&Methods\textbackslash Missing rate (\%) & 10 & 20 & 30 & 40 & 50 & Params. \\
\midrule \midrule
\multirow[c]{3}{*}{Model-based}&Graph Low-pass           & $0.614$   & $0.658$   & $0.724$   & $0.795$   & $0.851$ & 1  \\
&SVD                      & $1.009$   & $0.994$   & $0.998$   & $1.004$   & $1.001$  & 1 \\
&Proposed (Iterative)     & $0.020$   & $0.020$   & $0.025$   & $\bf0.033$   & \blue{$0.081$} & 7  \\\midrule
\multirow[c]{3}{*}{Data-driven}&Auto-Encoder & $0.892$   & \blue{$0.873$}   & $0.882$   & $0.886$   & $0.882$ & 293892  \\
&\blue{GCN (spatial)}     & \blue{$0.356$}   & \blue{$0.332$}   & \blue{$0.335$}   & \blue{$0.351$}   & \blue{$0.406$}  & \blue{31036} \\
&\blue{GCN (modality)}    & \blue{$0.137$}   & \blue{$0.146$}   & \blue{$0.174$}   & \blue{$0.224$}   & \blue{$0.329$}  & \blue{25906} \\\midrule
\multirow[c]{6}{*}{Unrolled}&TGSR     & $0.249$   & $0.306$   & $0.378$   & $0.467$   & $0.593$  & 18  \\
&LLAP                     & \blue{$0.179$}   & \blue{$0.242$}   & \blue{$0.334$}   & \blue{$0.458$}   & \blue{$0.598$}  & 18 \\
&\blue{NestDAU}           & \blue{$0.239$}   & \blue{$0.264$}   & \blue{$0.303$}   & \blue{$0.359$}   & \blue{$0.453$}  & \blue{16} \\
&\blue{IDGL}              & \blue{$0.448$}   & \blue{$0.474$}   & \blue{$0.515$}   & \blue{$0.565$}   & \blue{$0.624$}  & \blue{3} \\
&Prop. (Fixed Modality)   & $0.038$   & $0.041$   & $0.052$   & $0.063$   & $0.088$ & 108  \\
&Prop. (Without GL)       & $0.514$   & $0.557$   & $0.612$   & $0.676$   & $0.749$  & 7 \\
&Proposed                 & $\bf0.013$   & $\bf0.017$   & $\bf0.024$   & $0.036$   & $\bf0.079$ & 183  \\
\bottomrule
\end{tabular}%
}
\end{table}

\begin{table}[tb]
\centering
\caption{\blue{Average MSE on Signed Piecewise Smooth Dataset at 20 dB. All differences from the proposed method are statistically significant ($p<0.05$)}}
\vspace{-8pt}
\label{tab:synth_pwcs_result}
\resizebox{\linewidth}{!}{%
\begin{tabular}{@{}clcccccc@{}}
\toprule
&Methods\textbackslash Missing rate (\%) & 10 & 20 & 30 & 40 & 50 & Params. \\
\midrule \midrule
\multirow[c]{3}{*}{Model-based}&Graph Low-pass           & $0.849$   & $0.889$   & $0.911$   & $0.946$   & $0.964$  & 1 \\
&SVD                      & $0.992$   & $1.000$   & $0.991$   & $1.002$   & $0.998$ & 1 \\
&Proposed (Iterative)     & $0.092$   & $0.103$   & $0.117$   & $0.140$   & $0.168$  & 7  \\\midrule
\multirow[c]{3}{*}{Data-driven}&Auto-Encoder             & $1.014$   & $1.021$   & $1.014$   & $1.024$   & $1.020$  & 298392  \\
&GCN (spatial)            & $0.465$   & $0.449$   & $0.436$   & $0.465$   & $0.516$ & 31036  \\
&GCN (modality)           & $0.161$   & $0.184$   & $0.228$   & $0.308$   & $0.418$ & 25906  \\\midrule
\multirow[c]{7}{*}{Unrolled}&TGSR                     & $0.261$   & $0.328$   & $0.406$   & $0.514$   & $0.633$ & 18   \\
&LLAP                     & $0.137$   & $0.207$   & $0.291$   & $0.421$   & $0.557$  & 18 \\
&NestDAU                  & $0.258$   & $0.297$   & $0.331$   & $0.399$   & $0.503$  & 16 \\
&IDGL                     & $0.346$   & $0.378$   & $0.405$   & $0.447$   & $0.492$  & 3 \\
&Prop. (Fixed Modality)   & $0.084$   & $0.094$   & $0.106$   & $0.121$   & $0.146$  & 108 \\
&Prop. (Without GL)       & $0.610$   & $0.659$   & $0.706$   & $0.775$   & $0.836$  & 7 \\
&Proposed                 & $\bf0.043$   & $\bf0.050$   & $\bf0.059$   & $\bf0.075$   & $\bf0.109$  & 183  \\
\bottomrule
\end{tabular}%
}
\end{table}

\begin{table}[tb]
\centering
\caption{\blue{Ablation study: masked MSE at 20 dB averaged over missing rates. All differences from the proposed configuration are statistically significant ($p<0.05$); $^\ddagger$ marks significantly better variants}}
\vspace{-8pt}
\label{tab:ablation}
\resizebox{\linewidth}{!}{%
\begin{tabular}{@{}lccc@{}}
\toprule
Variant & Matrix Normal & Piecewise Smooth & Signed Piecewise Smooth \\
\midrule \midrule
Proposed (Iterative)          & $0.230$ & $0.036$ & $0.124$ \\
Prop. (Without GL)            & $0.292$ & $0.622$ & $0.717$ \\
Prop. (Fixed Modality)        & $0.284$ & $0.056$ & $0.110$ \\
Prop. (Unsigned modality)     & $0.228$ & $\bf0.017^\ddagger$ & $0.069$ \\
Prop. (Pairwise signs)        & $\bf0.222^\ddagger$ & $0.033$ & $0.075$ \\
Prop. (Rank-1 spectral signs) & $\bf0.222^\ddagger$ & $0.059$ & $0.090$ \\
Proposed                      & $0.227$ & $0.034$ & $\bf0.067$ \\
\bottomrule
\end{tabular}%
}
\end{table}

\subsubsection{Graph Recovery Evaluation}
\blue{To directly validate the learned graphs, we compare the estimated adjacency matrices at each unrolled layer against the ground-truth graphs (20 dB SNR, 10\% missing rate, 5 folds $\times$ 20 evaluation samples).
Figure~\ref{fig:graph_recovery} reports two metrics computed on the upper triangle of each adjacency matrix: The edge AUC, \ie, the threshold-free ranking of the estimated edge magnitudes $|\hat{W}_{ij}|$ against the true edge support, and the relative adjacency error $\min_a \|a\hat{\mathbf{W}}-\mathbf{W}\|_F/\|\mathbf{W}\|_F$, which accounts for the unidentifiable global scale.
Each trajectory traces the estimates from the first to the final unrolled layer.
The modality graph is compared against LLAP; for the spatial graph, IDGL is also included.}

\blue{We emphasize that the objective in \eqref{eq:original_objective} is not to recover the ground-truth graphs but to find the twofold graph that best restores the signals, as reflected by the restoration results above.
Nevertheless, the proposed method estimates graphs closer to the ground truth than the competing methods in most cases.
An interesting observation is that the recovery metrics of the spatial and modality graphs are in a trade-off relationship across the unrolled layers for almost all datasets.
On the signed piecewise smooth dataset, for instance, the spatial graph improves over the layers, moving toward the upper left in Fig.~\ref{fig:graph_recovery}(c), while the modality graph gradually degrades, moving toward the lower right.
This suggests that the unrolled training splits the dominant filtering axis per layer.
Such a pattern is not observed for LLAP.}


\begin{figure}[tb]
    \centering
    \subfigure[Matrix normal]{\includegraphics[width=0.32\linewidth]{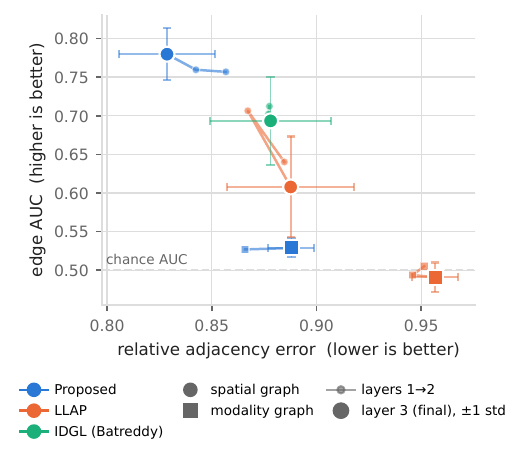}}
    \subfigure[Piecewise smooth]{\includegraphics[width=0.32\linewidth]{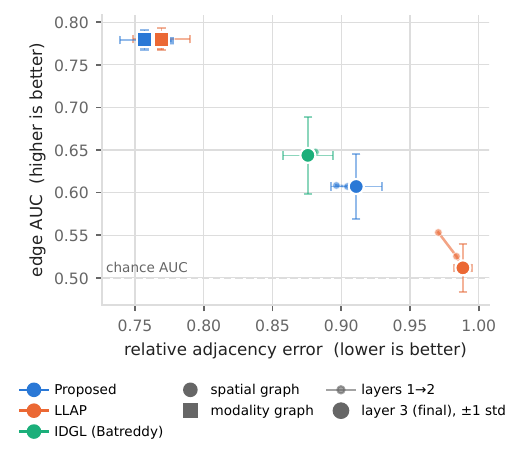}}
    \subfigure[Signed piecewise smooth]{\includegraphics[width=0.32\linewidth]{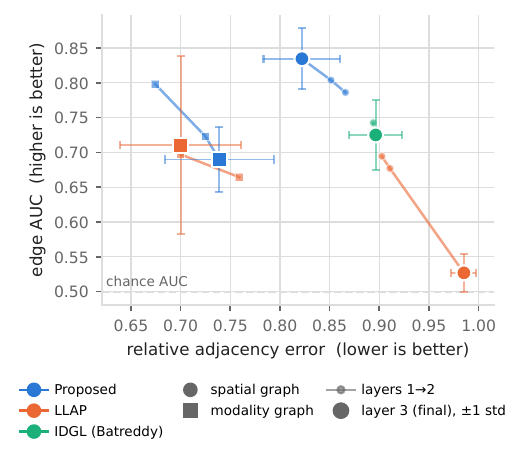}}
    \caption{\blue{Graph recovery at 20 dB SNR and a 10\% missing rate: Edge AUC versus relative adjacency error for the spatial (circles) and modality (squares) graphs. Markers trace the three unrolled layers; the final layer is shown with $\pm1$ standard deviation.}}
    \label{fig:graph_recovery}
\end{figure}

\begin{table}[tb]
\centering
\caption{\blue{Edge-sign accuracy of the sign estimators at the final layer (20 dB, 10\% missing rate)}}
\vspace{-8pt}
\label{tab:sign_estimators}
\resizebox{\linewidth}{!}{%
\begin{tabular}{@{}lccccc@{}}
\toprule
& \multicolumn{2}{c}{Matrix Normal} & \multicolumn{2}{c}{Signed Piecewise Smooth} & Piecewise Smooth \\
\cmidrule(lr){2-3}\cmidrule(lr){4-5}\cmidrule(l){6-6}
Estimator & Sign acc. & Neg. recall & Sign acc. & Neg. recall & Sign acc. \\
\midrule \midrule
Unsigned (all positive) & $0.594$ & $0.000$ & $0.607$ & $0.000$ & $\bf1.000$ \\
Pairwise ($\operatorname{sign}(K_{ij})$) & $0.919 \pm 0.033$ & $0.914$ & $\bf0.996 \pm 0.014$ & $\bf0.996$ & $0.678$ \\
Rank-1 ($\mathbf{s}\mathbf{s}^\top$) & $\bf0.991 \pm 0.018$ & $\bf0.991$ & $0.926 \pm 0.090$ & $0.916$ & $0.770$ \\
Subspace ($r=2$, proposed) & $0.858 \pm 0.025$ & $0.853$ & $0.966 \pm 0.057$ & $0.955$ & $0.753$ \\
\bottomrule
\end{tabular}%
}
\end{table}

\subsection{Evaluation on Real-world Data}
We evaluate the proposed method on interpolation of \blue{two real-world datasets: a Japan meteorological dataset and a Beijing air-quality dataset}.

\noindent\textit{Japan Meteorological Dataset}\footnote{Japan Meteorological Agency, ``AMeDAS,''. Available: jma.go.jp}:

The dataset consists of daily records for four modalities: average temperature, average sea-level air pressure, average humidity, and sunshine duration. The records between the years 2018--2025 were collected from 141 observatories located across Japan. We stack $28$ days of observations per month into a single matrix, resulting in $96$ matrices of dimension $141\times 112$. The value of each modality is standardized to have a mean of zero and a standard deviation of one.

\vspace{3pt}\noindent\textit{Beijing Air-Quality Dataset}\footnote{\blue{UCI Machine Learning Repository, ``Beijing Multi-Site Air-Quality Data''. Available: archive.ics.uci.edu/dataset/501}}:

\blue{The dataset consists of hourly air-quality records \cite{zhang2017cautionary} from 12 monitoring stations in Beijing between March 2013 and February 2017 (1461 days).
Each station reports ten quantities: six pollutant concentrations (PM$_{2.5}$, PM$_{10}$, SO$_2$, NO$_2$, CO, and O$_3$) and four meteorological variables (temperature, pressure, dew point, and wind speed).
This dataset complements the meteorological one in two ways: The modality axis consists of ten distinct physical quantities with no natural ordering, and the inter-modal relations are genuinely mixed in sign---51\% of the inter-variable correlations are negative, \eg, temperature--pressure ($-0.83$) and NO$_2$--O$_3$ ($-0.47$).
One sample is a single hourly snapshot $\mathbf{X}\in\mathbb{R}^{12\times10}$, whose rows are the stations and columns are the measured quantities; to keep the sample count comparable to the meteorological experiment, the records are subsampled to one day per week and four hours per day.
Each variable is standardized using the observed entries of the training folds, and entries genuinely missing in the raw records (1.7\%) are excluded from both training and evaluation through the observation mask.}

\subsubsection{Experimental Setup}
We evaluate two types of missing patterns: Missing completely at random (MCAR) and missing with random spatial outages (MRSO). In the MCAR pattern, each entry is removed independently at random, modeling sensor failures whose occurrence is unrelated to the measured values. In the MRSO pattern, a randomly selected subset of stations experiences a fixed blackout\blue{---a 14-day outage for the meteorological dataset, and a loss of the entire station row for the air-quality dataset---}modeling structured outages in which all modalities at a station are simultaneously unavailable\blue{; such entries can only be recovered through the spatial graph}. Example observations and masks for both patterns are shown in Fig.~\ref{fig:sample_data}.
For each type, we vary the missing rate over $\{10,\,20,\,30,\,40,\,50\}$\%: for MCAR this is the fraction of missing entries, and for MRSO it is the fraction of stations with blackouts.
\blue{For the meteorological dataset, we perform} leaving-one-year-out cross-validation, where each year is treated as a validation set, while the rest of the years are used as training sets, and report the average MSEs.
\blue{For the air-quality dataset, we perform eight-fold cross-validation over contiguous blocks of days (732 training and 104 test samples per fold); the training protocol is identical to that of the meteorological dataset.}

\blue{Unlike the synthetic datasets, the columns of the meteorological dataset are ordered within each modality (28 consecutive days per quantity). We therefore evaluate two variants of IDGL on this dataset: The global-difference variant applies the column-difference prior across all columns as published, whereas the block-difference variant restricts the difference operator to the columns within each modality, so that no penalty is imposed across the boundaries between modality blocks.
For the air-quality dataset, whose columns carry no ordering, only the global-difference variant is applicable.
LLAP-DAU is not reported on the air-quality dataset because an epoch of training LLAP-DAU on the air-quality dataset takes more than 9 hours.}

\begin{figure}
    \centering
    \subfigure[MCAR Sample]{\includegraphics[width=0.23\linewidth]{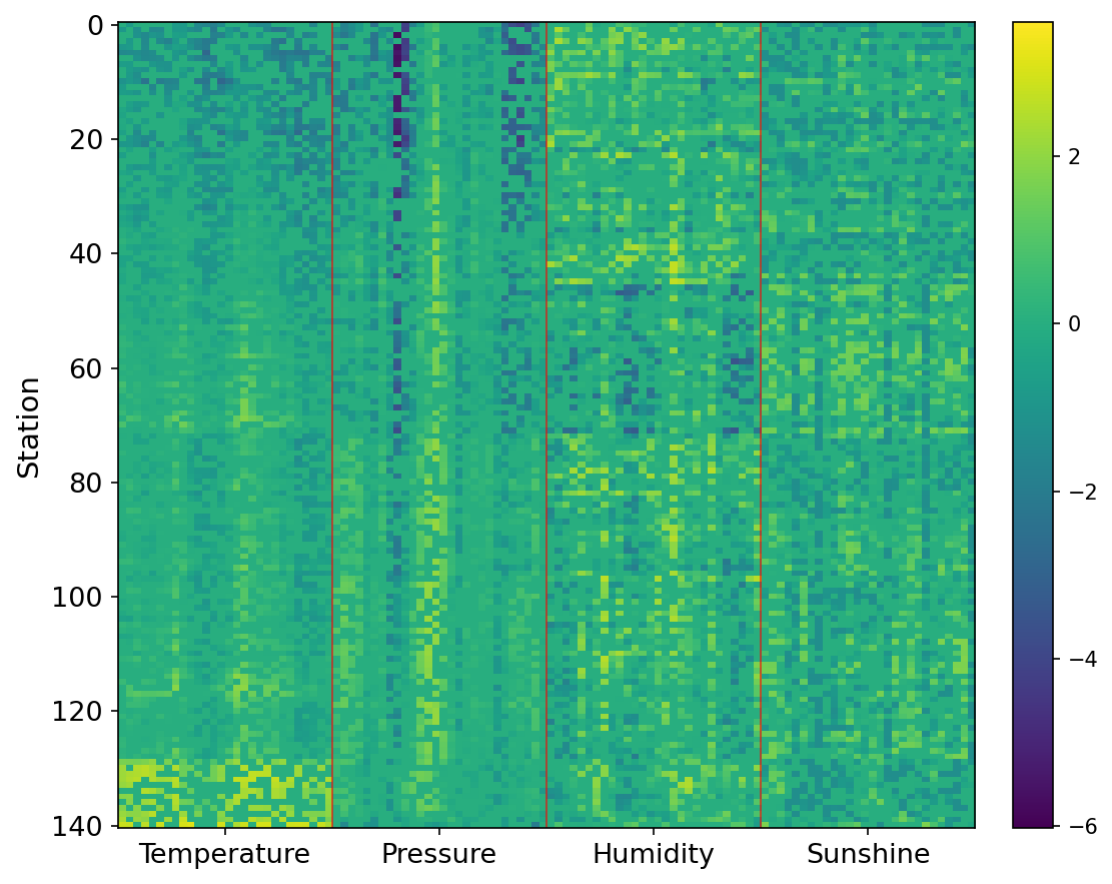}}
    \subfigure[MCAR Mask]{\includegraphics[width=0.23\linewidth]{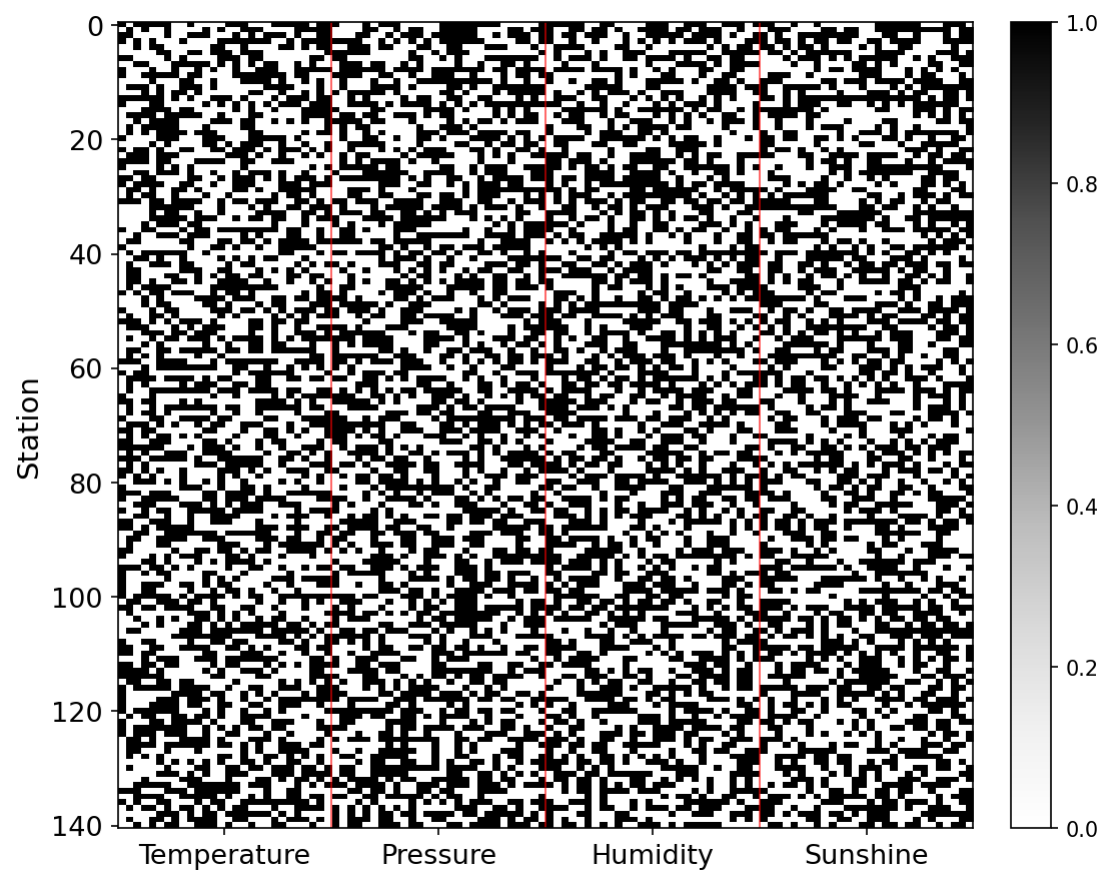}}
    \subfigure[MRSO Sample]{\includegraphics[width=0.23\linewidth]{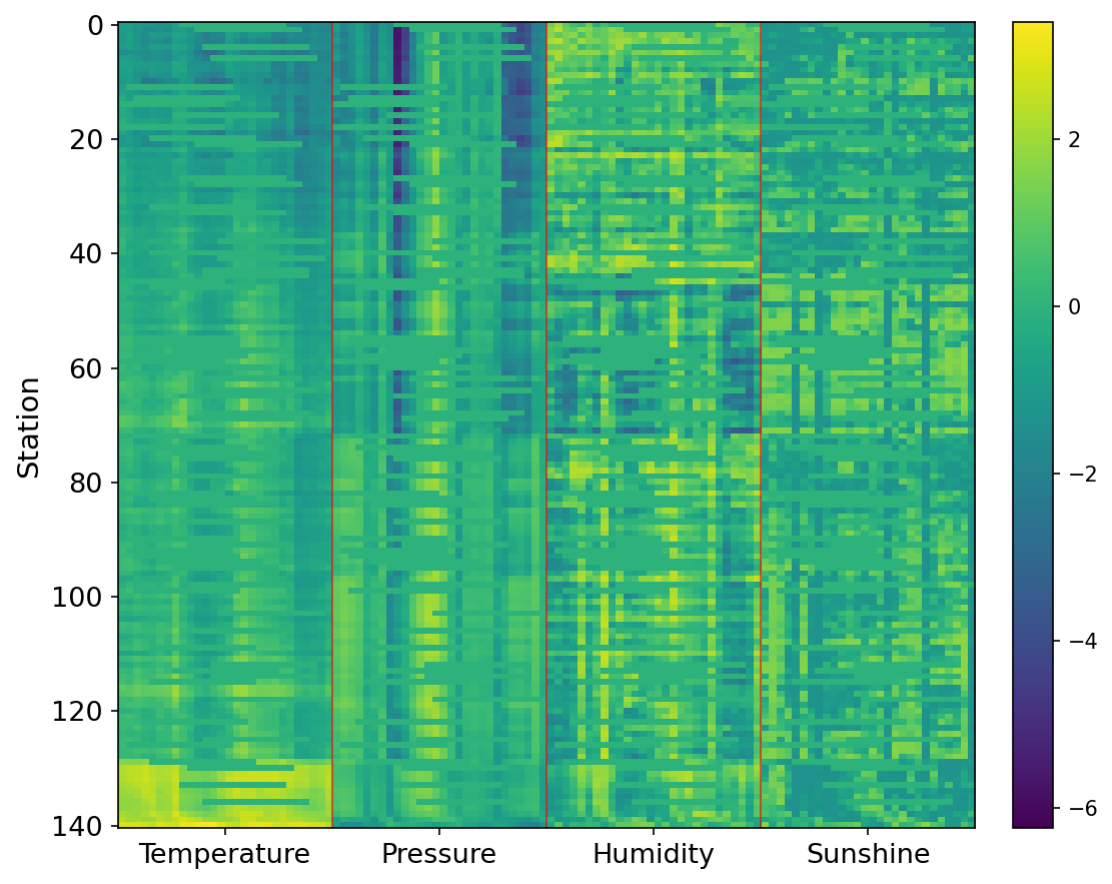}}
    \subfigure[MRSO Mask]{\includegraphics[width=0.23\linewidth]{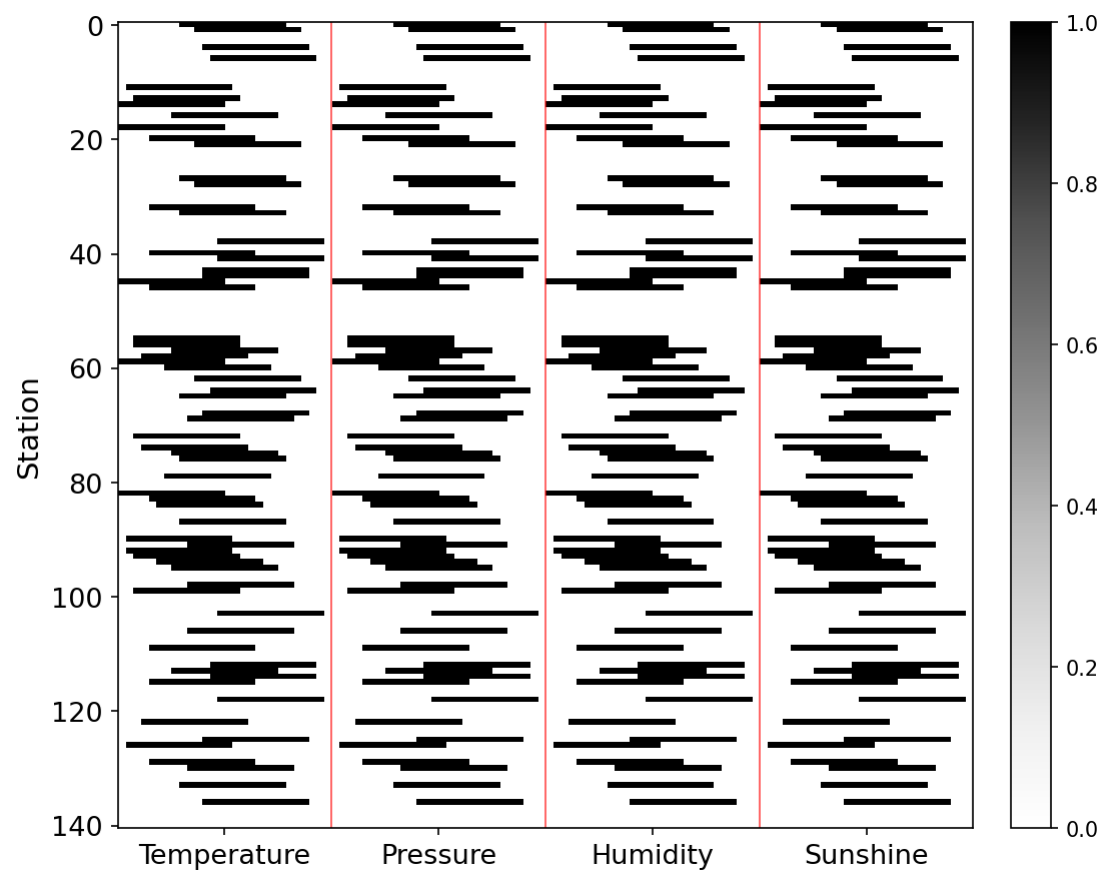}}
    \caption{Example observations and mask matrices of Japan meteorological dataset. Both missing patterns at a missing rate of 50\%.}
    \label{fig:sample_data}
\end{figure}

\subsubsection{\blue{Quantitative Results}}

Figure~\ref{fig:real_mse_vs_drop} shows the masked MSE as a function of missing rate \blue{on the meteorological dataset}.
The proposed method outperforms all baselines across the full missing-rate range for both MCAR and MRSO patterns, with notably smaller cross-fold variance than competing methods, indicating more stable generalization across years.

Figure~\ref{fig:per_modality_mse} reports per-modality \blue{errors}.
While most baselines exhibit a large error in temperature, the proposed method achieves consistently low error across all four modalities (temperature, pressure, humidity, and sunshine duration).
Compared to the fixed-modality variant, the proposed method yields substantially lower errors for humidity and sunshine duration recovery, particularly under the MRSO pattern.
This gap demonstrates the benefit of modeling negative inter-modal correlations through a signed modality graph, which a fixed path graph cannot capture.

\blue{For the air-quality dataset, Fig.~\ref{fig:air_mse_vs_drop} shows the masked MSE as a function of missing rate, and Fig.~\ref{fig:air_per_modality} reports the per-modality errors.
The proposed method achieves the lowest error among all competing methods under both patterns.
IDGL is competitive under MCAR but degenerates under MRSO: It is a purely single-graph method that initializes its graph from the covariance of the observed signal, and under station masking the covariance involving a masked station is zero, so no edges can be learned for those nodes.}

\blue{The comparison between the trained and iterative variants of the proposed method is also informative.
The two variants share the same objective and solver; unrolling adapts only the layer-wise regularization weights and step sizes.
On the matrix normal dataset, the problem is information-limited: The Bayes-optimal error under the generating model is 0.19--0.21, so both variants operate near the floor.
Under station outages on the air-quality dataset, the problem is prior-limited: An entirely unobserved row must be extrapolated through the spatial graph, the layer-wise fidelity--regularization balance has no observations to exploit, and a substantial part of the signal energy (62\% outside the lowest quarter of the modality-graph spectrum) lies in components that the smoothness prior cannot reconstruct for any parameter setting.
Conversely, wherever the observations are informative and the data carry structure the estimator can exploit---the signed piecewise smooth data and MCAR on the real datasets---the trained variant improves on the iterative one by 18--46\%.}

\subsubsection{\blue{Qualitative Results}}

Qualitative reconstructions at a missing rate of \blue{30\%} are shown in Figs.~\ref{fig:output_samples_MCAR} and~\ref{fig:output_samples_MAR}.
The baseline methods tend to oversmooth the signal, losing the temporal dynamics within each modality. The proposed method recovers the temporal structure more faithfully.
\blue{Figures~\ref{fig:output_samples_air_MCAR} and~\ref{fig:output_samples_air_MAR} show the corresponding reconstructions for the air-quality dataset.
Among the unrolled methods, TGSR and NestDAU reproduce only the dominant inter-modality pattern and IDGL leaves visible artifacts on the masked station rows under MRSO, whereas the proposed method restores the masked entries consistently with both the modality pattern and the station-wise variations.}

\begin{figure}
    \centering
    \subfigure[MCAR]{\includegraphics[width=0.48\linewidth]{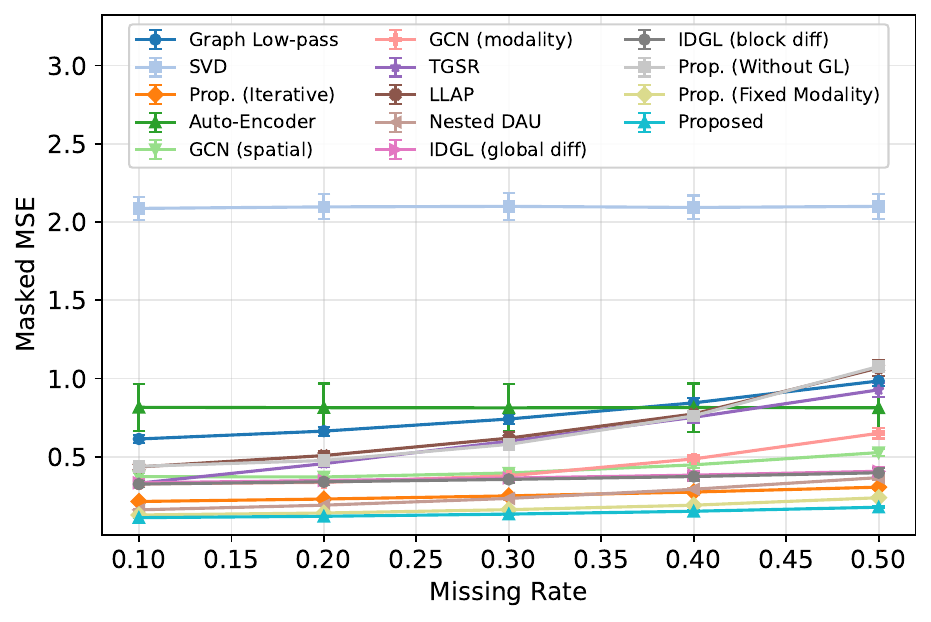}}
    \subfigure[MRSO]{\includegraphics[width=0.48\linewidth]{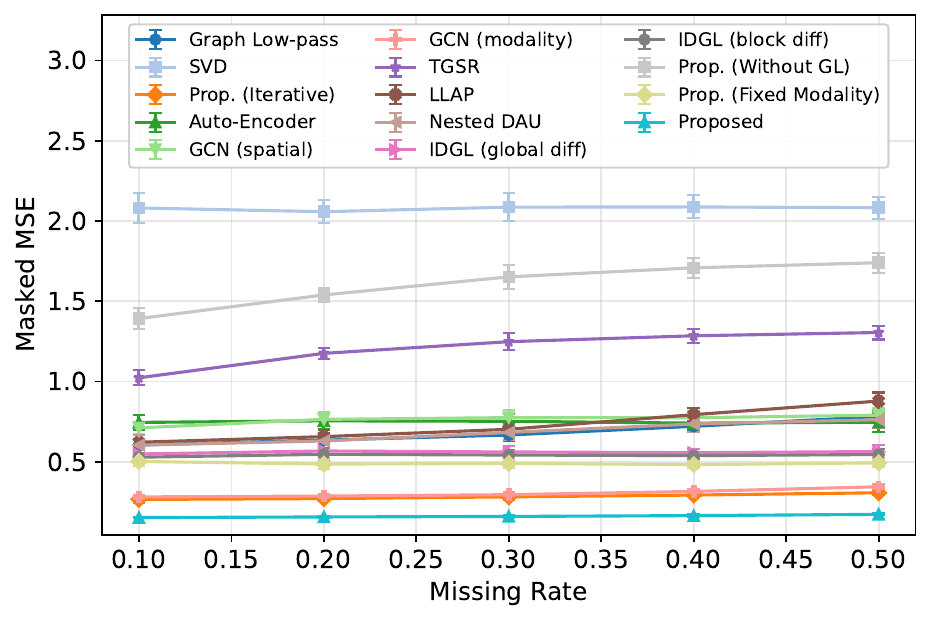}}
    \caption{Masked MSE as a function of missing rate, averaged over all folds. Error bars indicate cross-fold standard deviations.}
    \label{fig:real_mse_vs_drop}
\end{figure}

\begin{figure}
    \centering
    \includegraphics[width=\linewidth]{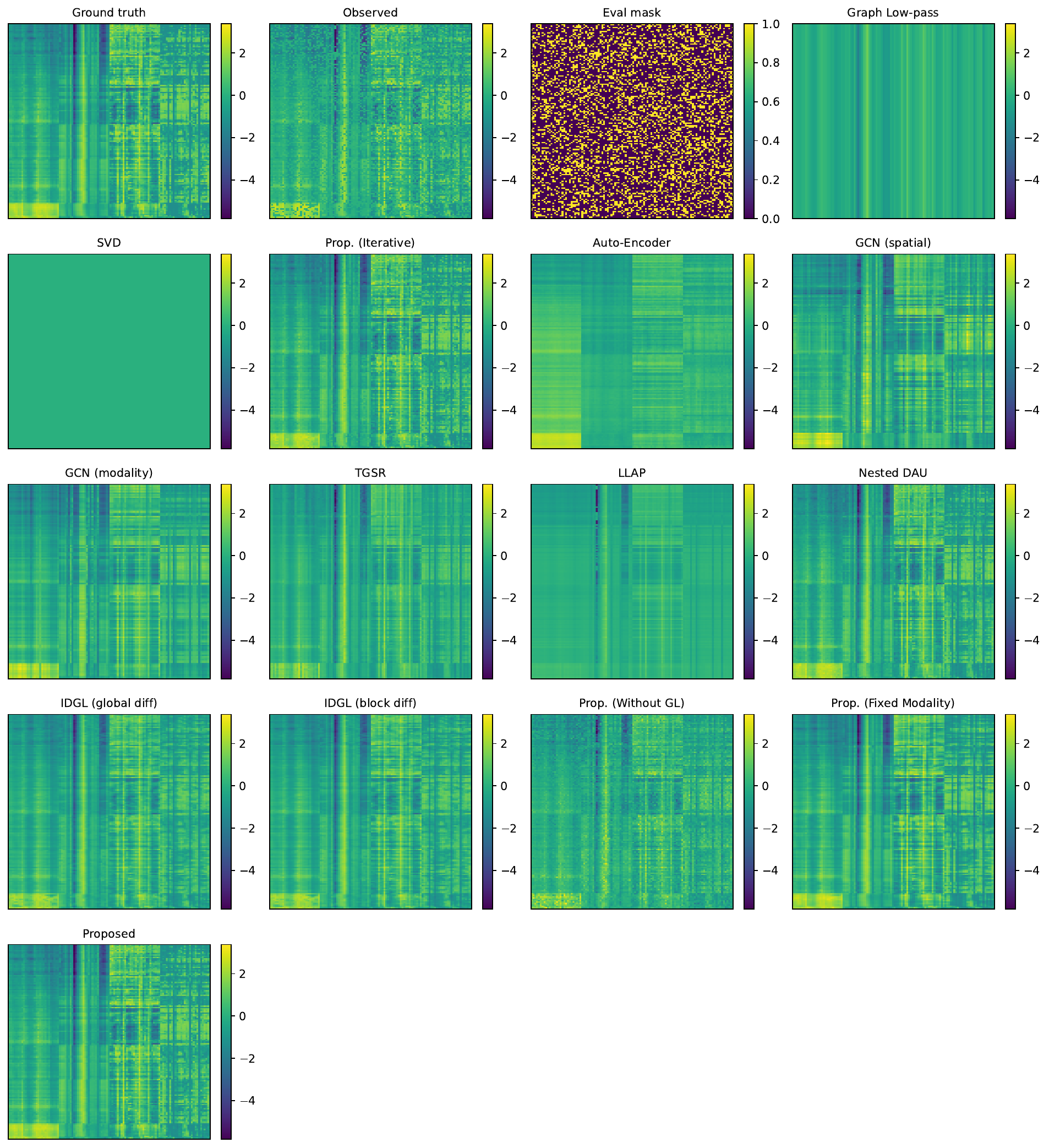}
    \caption{Reconstructed signals under the MCAR pattern at a missing rate of \blue{30\%}.}
    \label{fig:output_samples_MCAR}
\end{figure}

\begin{figure}
    \centering
    \includegraphics[width=\linewidth]{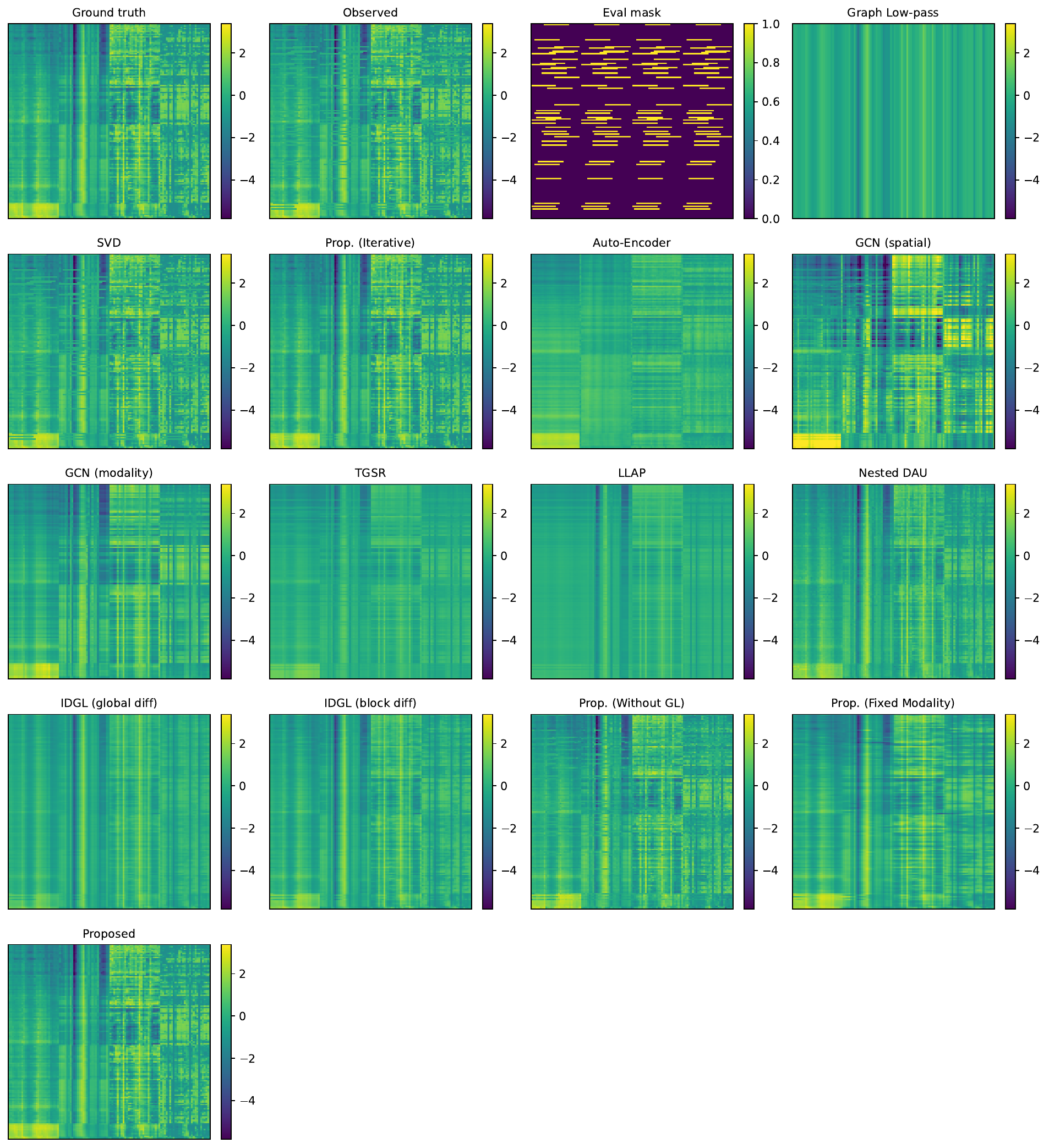}
    \caption{Reconstructed signals under the MRSO pattern at a missing rate of \blue{30\%}.}
    \label{fig:output_samples_MAR}
\end{figure}

\begin{figure}
    \centering
    \includegraphics[width=\linewidth]{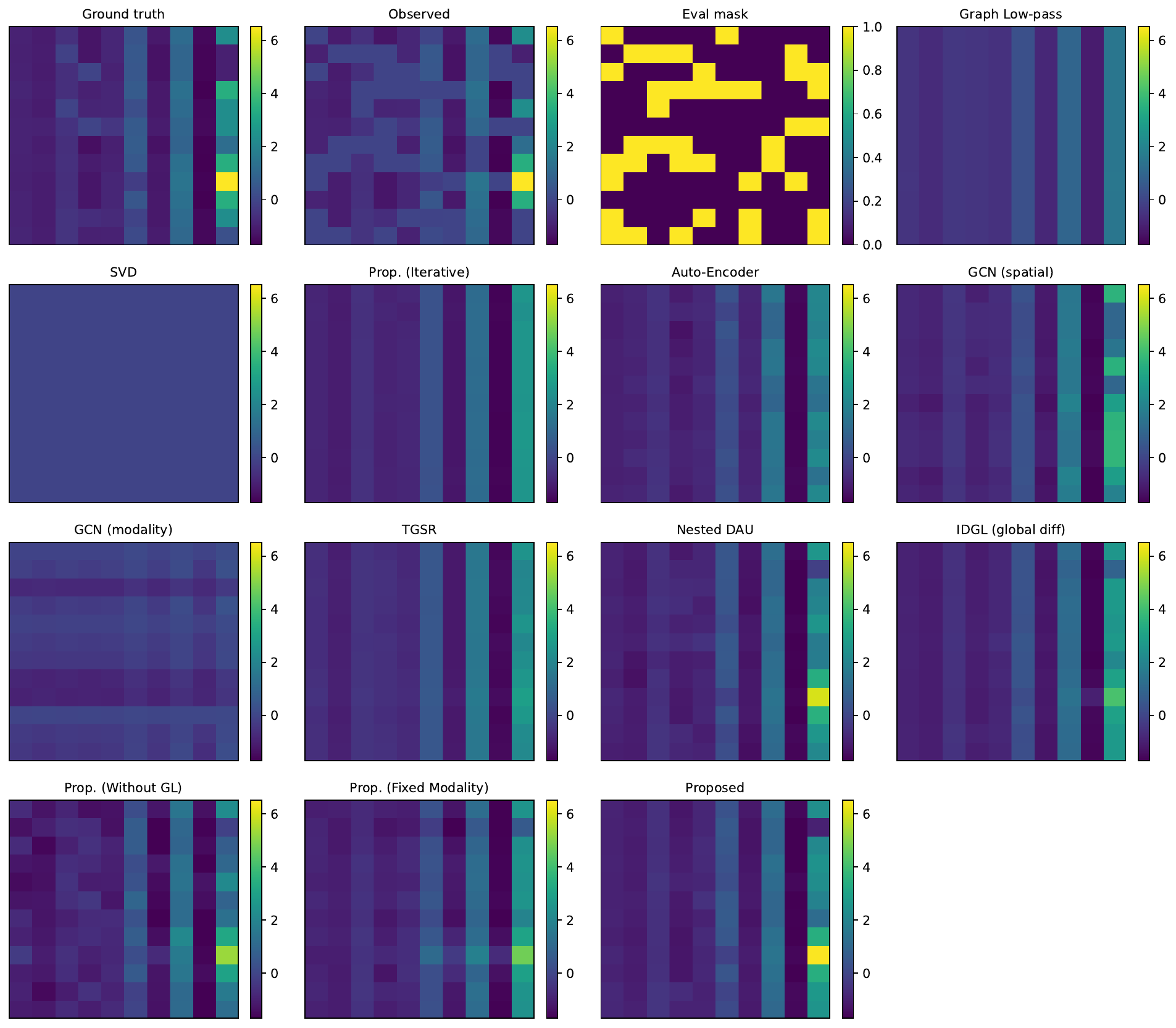}
    \caption{\blue{Reconstructed signals on the air-quality dataset under the MCAR pattern at a missing rate of 30\%.}}
    \label{fig:output_samples_air_MCAR}
\end{figure}

\begin{figure}
    \centering
    \includegraphics[width=\linewidth]{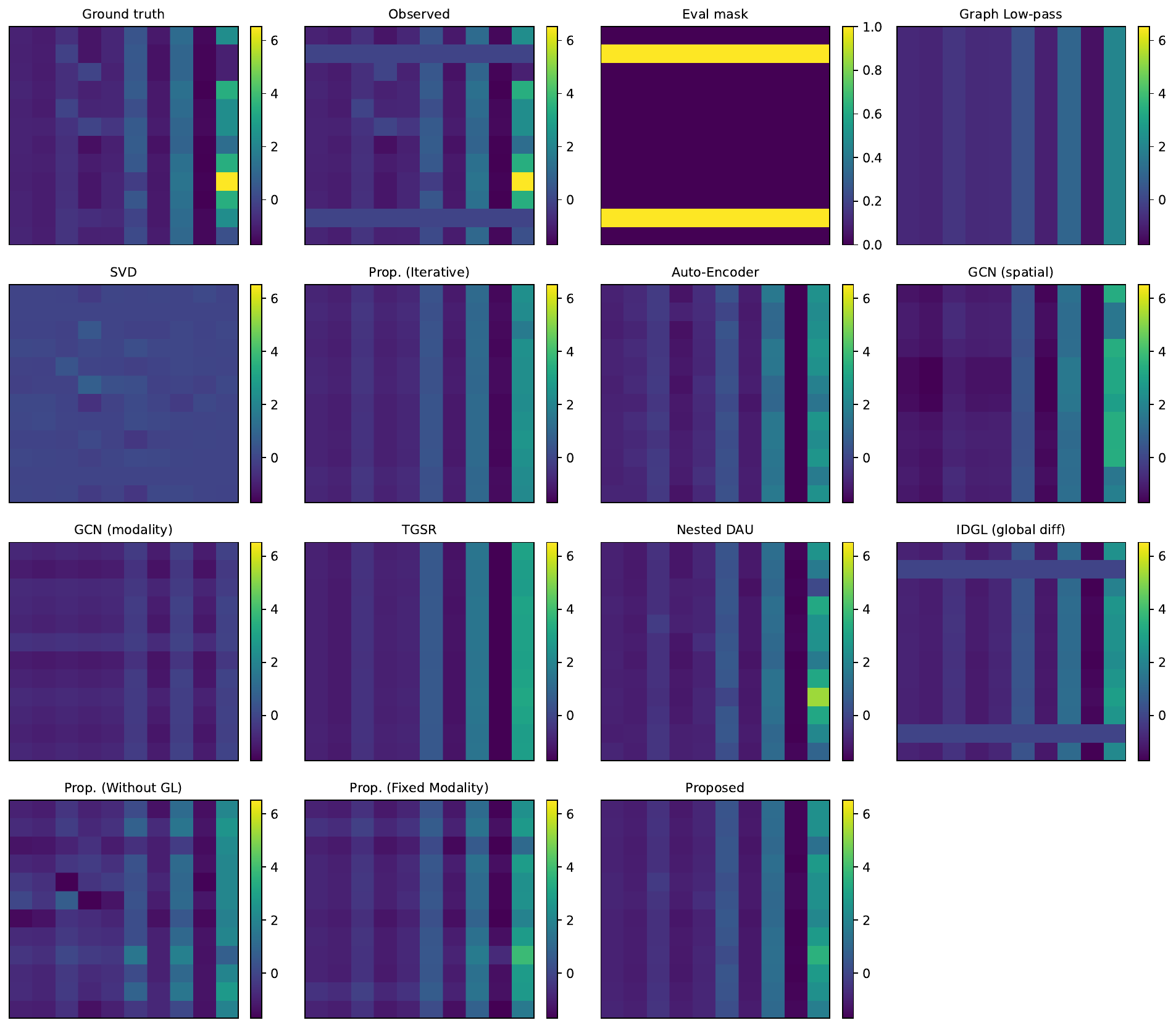}
    \caption{\blue{Reconstructed signals on the air-quality dataset under the MRSO pattern at a missing rate of 30\%.}}
    \label{fig:output_samples_air_MAR}
\end{figure}

\begin{figure}
    \centering
    \subfigure[MCAR]{\includegraphics[width=0.48\linewidth]{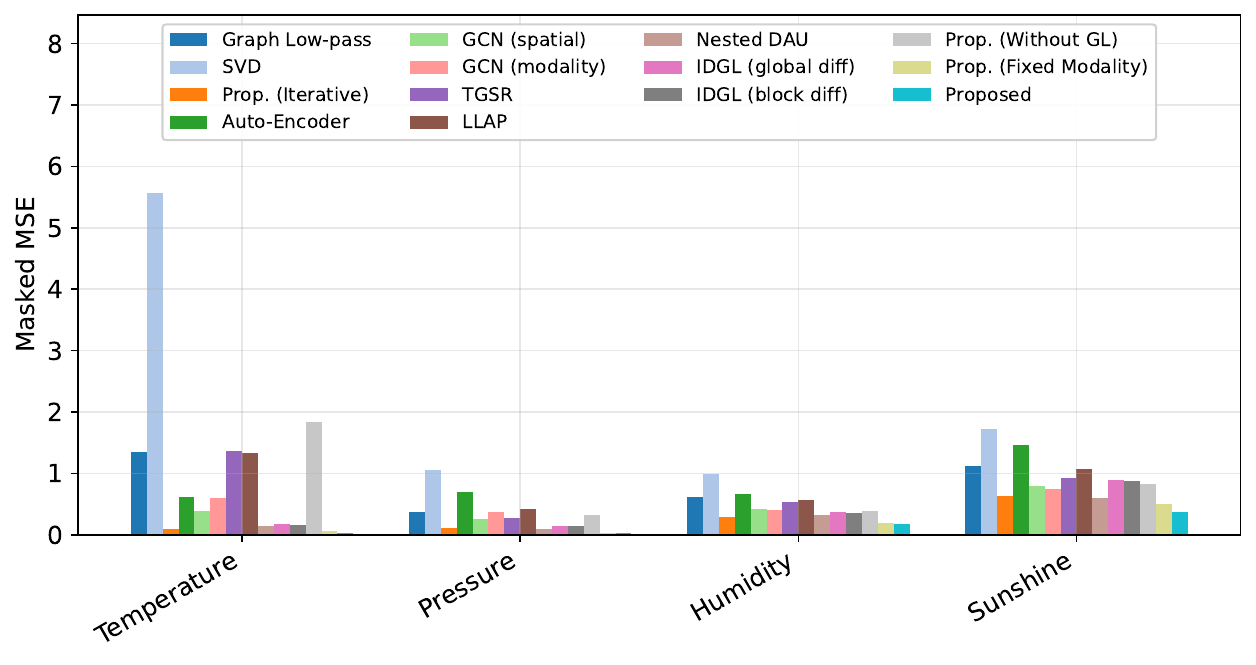}}
    \subfigure[MRSO]{\includegraphics[width=0.48\linewidth]{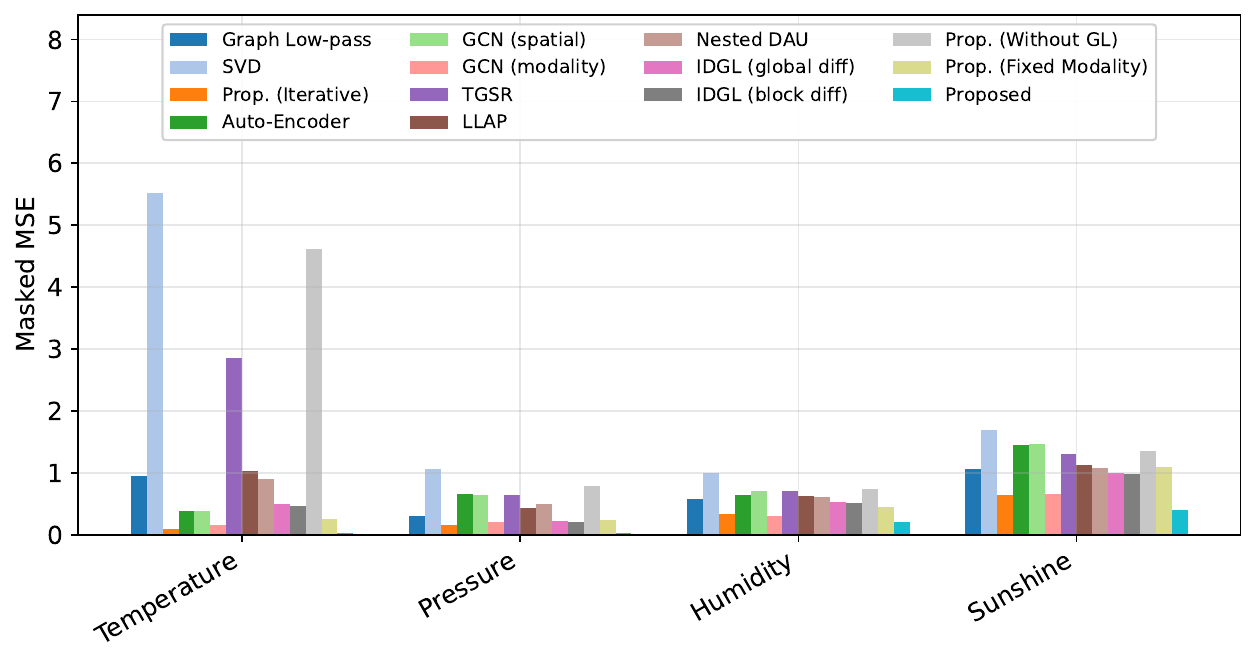}}
    \caption{\blue{Masked MSE} per modality, averaged across all folds and missing rates.}
    \label{fig:per_modality_mse}
\end{figure}

\begin{figure}
    \centering
    \subfigure[MCAR]{\includegraphics[width=0.48\linewidth]{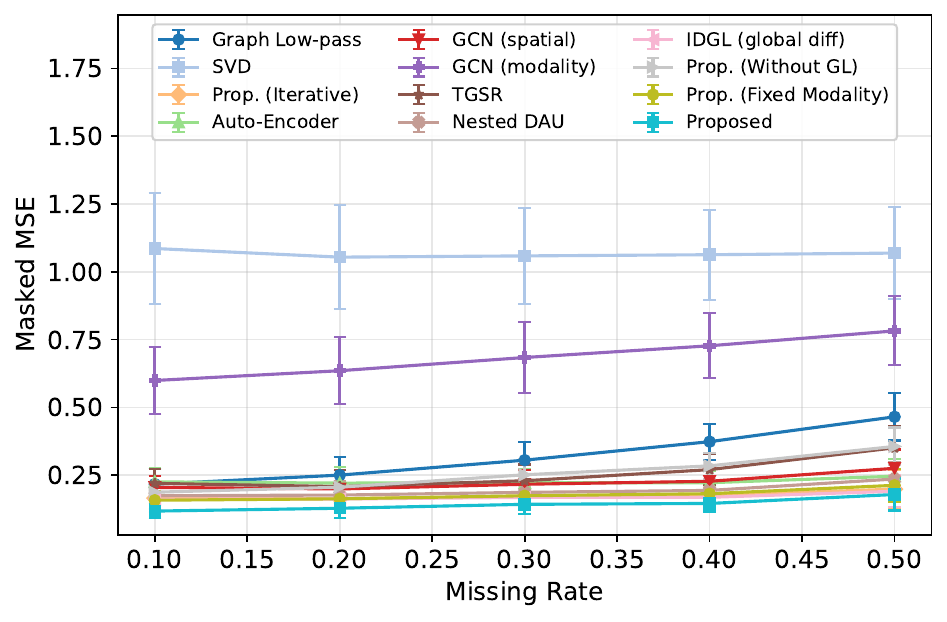}}
    \subfigure[MRSO]{\includegraphics[width=0.48\linewidth]{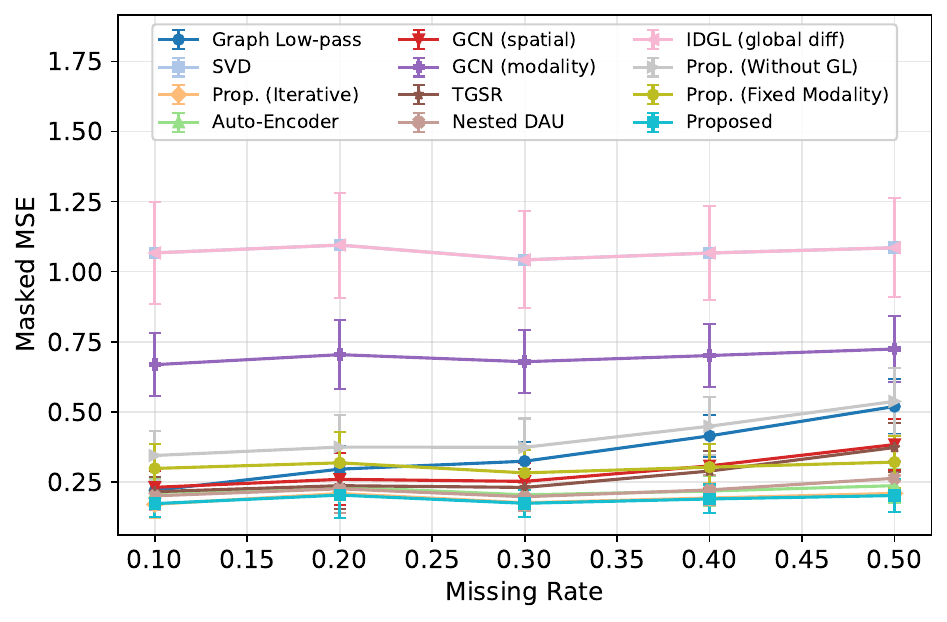}}
    \caption{\blue{Masked MSE on the air-quality dataset as a function of missing rate, averaged over all folds. Error bars indicate cross-fold standard deviations.}}
    \label{fig:air_mse_vs_drop}
\end{figure}

\begin{figure}
    \centering
    \subfigure[MCAR]{\includegraphics[width=0.48\linewidth]{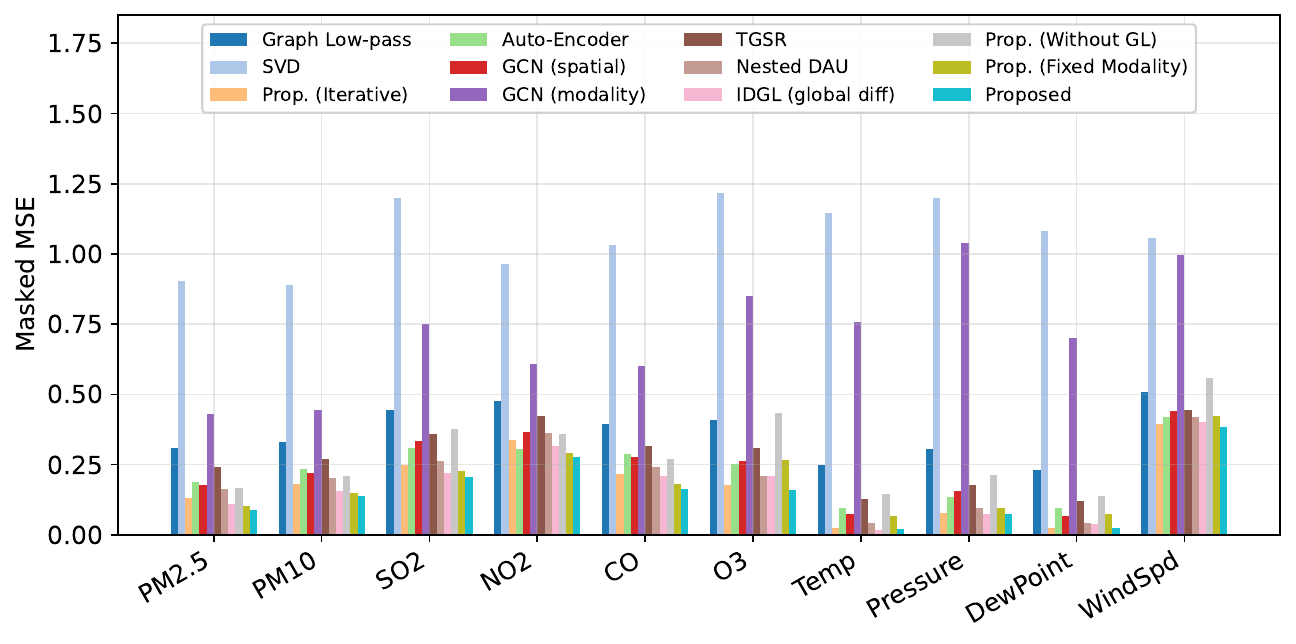}}
    \subfigure[MRSO]{\includegraphics[width=0.48\linewidth]{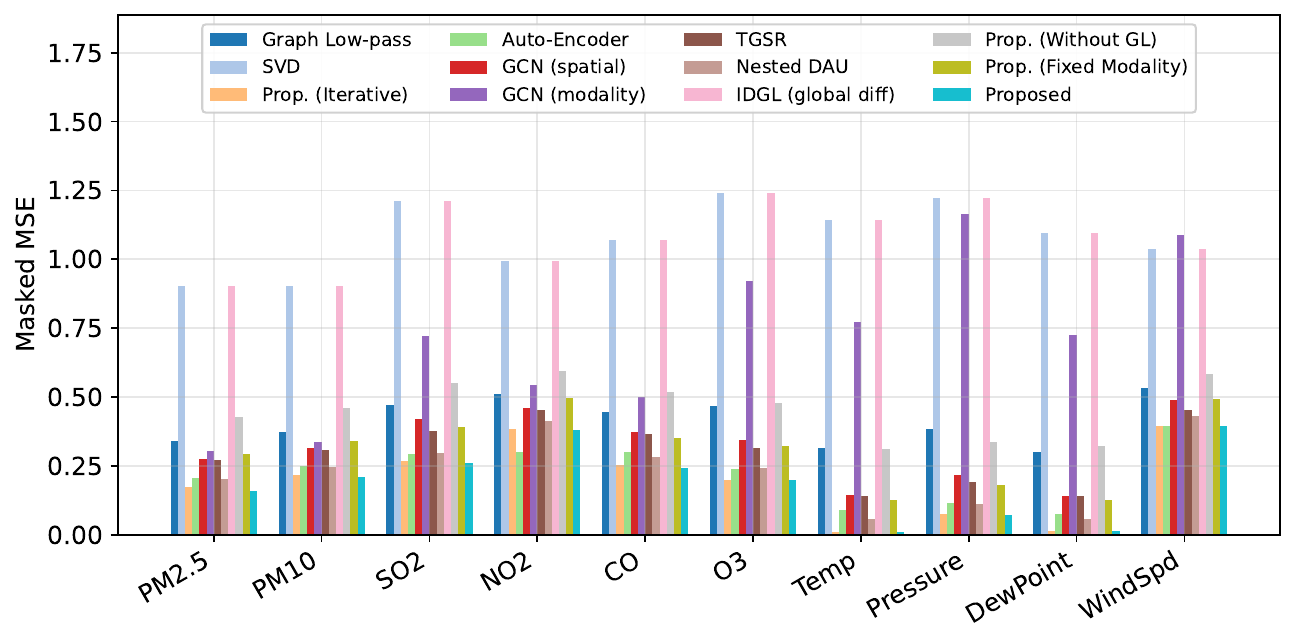}}
    \caption{\blue{Masked MSE per modality on the air-quality dataset, averaged across all folds and missing rates.}}
    \label{fig:air_per_modality}
\end{figure}

\begin{figure}[t]
    \centering
    \subfigure[Spatial graph, MCAR]{\includegraphics[width=0.9\linewidth]{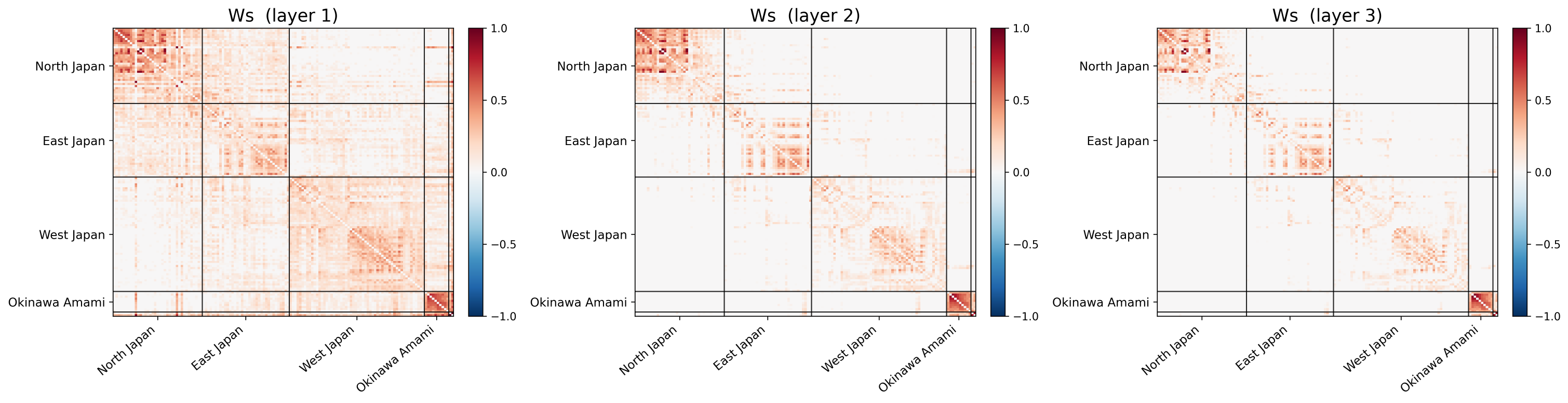}}\\
    \subfigure[Spatial graph, MRSO]{\includegraphics[width=0.9\linewidth]{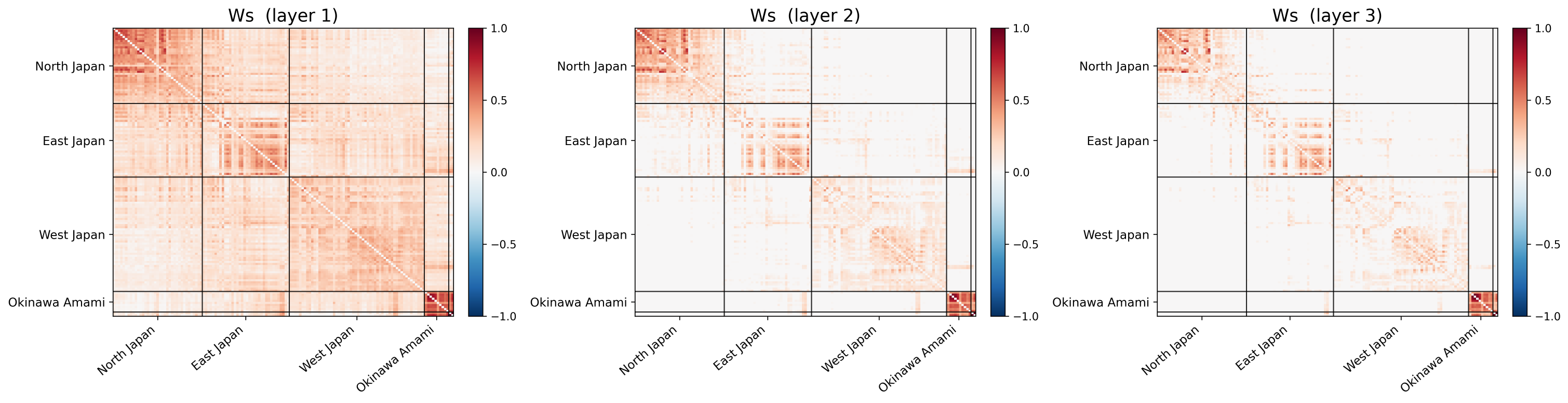}}\\
    \subfigure[Modality graph, MCAR]{\includegraphics[width=0.9\linewidth]{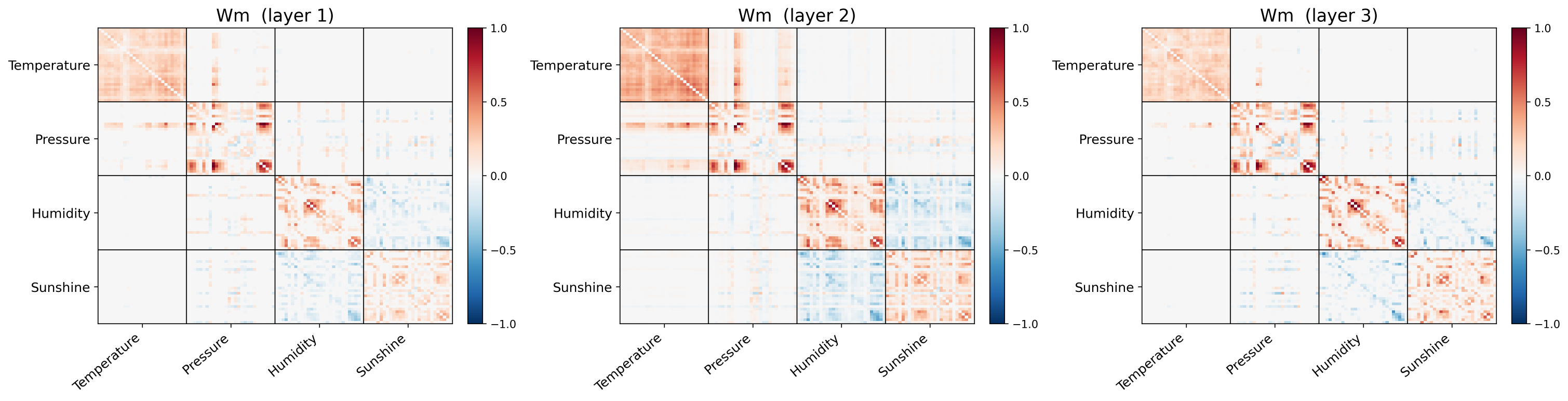}}\\
    \subfigure[Modality graph, MRSO]{\includegraphics[width=0.9\linewidth]{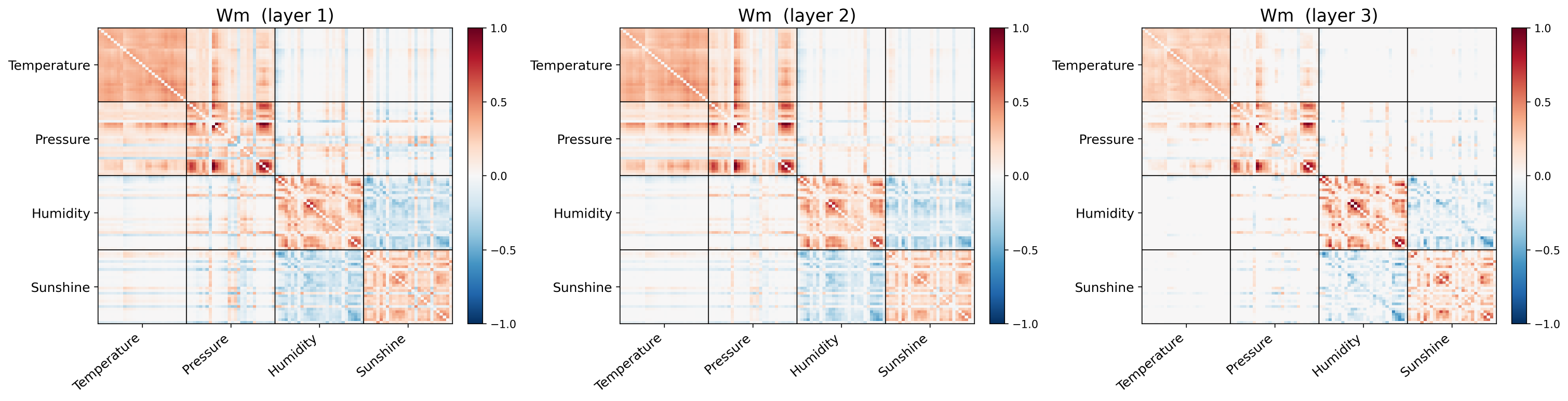}}
    \caption{Learned spatial adjacency matrix $\mathbf{W}_s$ (top) and signed modality adjacency matrix $\bar{\mathbf{W}}_m$ (bottom) for the MCAR (left) and MRSO (right) patterns, from the final outer layer of a representative fold.}
    \label{fig:learned_graphs}
\end{figure}
\begin{figure}[t]
    \centering
    \subfigure[Graph learning parameters, MCAR]{\includegraphics[width=0.9\linewidth]{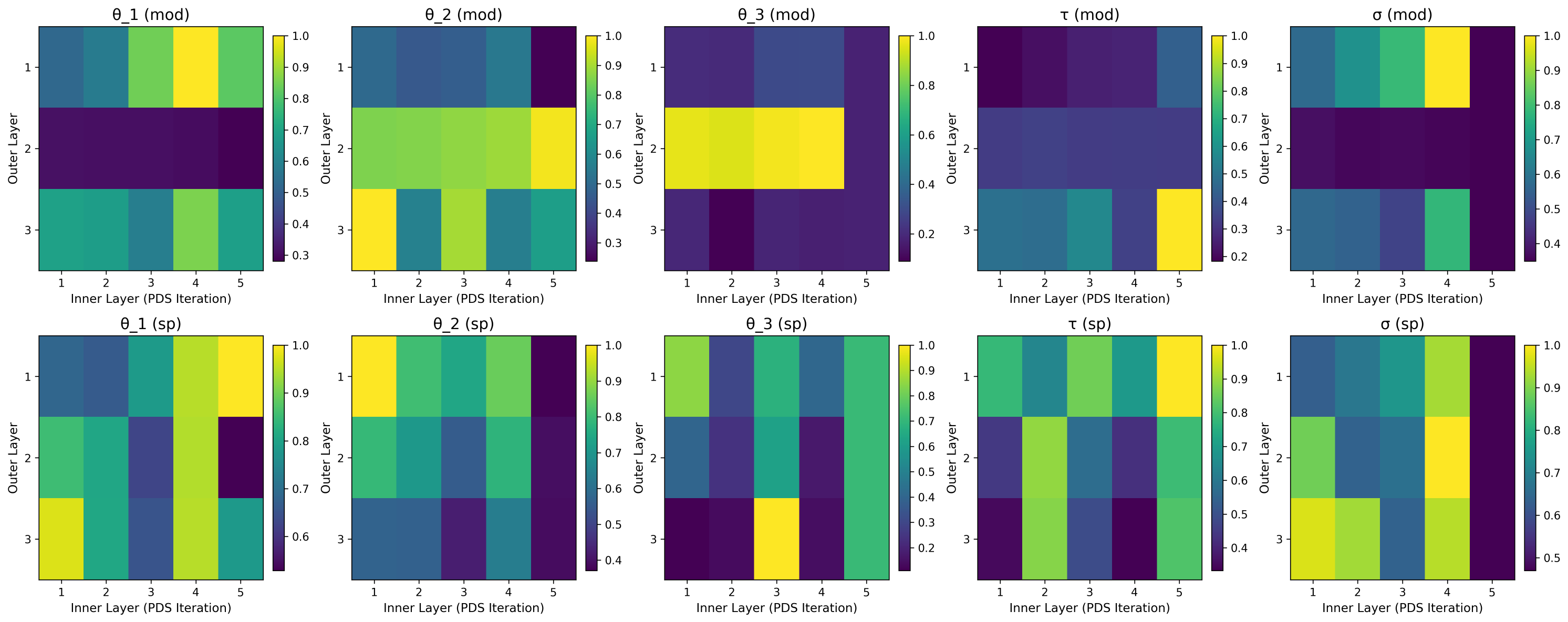}}\\
    \subfigure[Graph learning parameters, MRSO]{\includegraphics[width=0.9\linewidth]{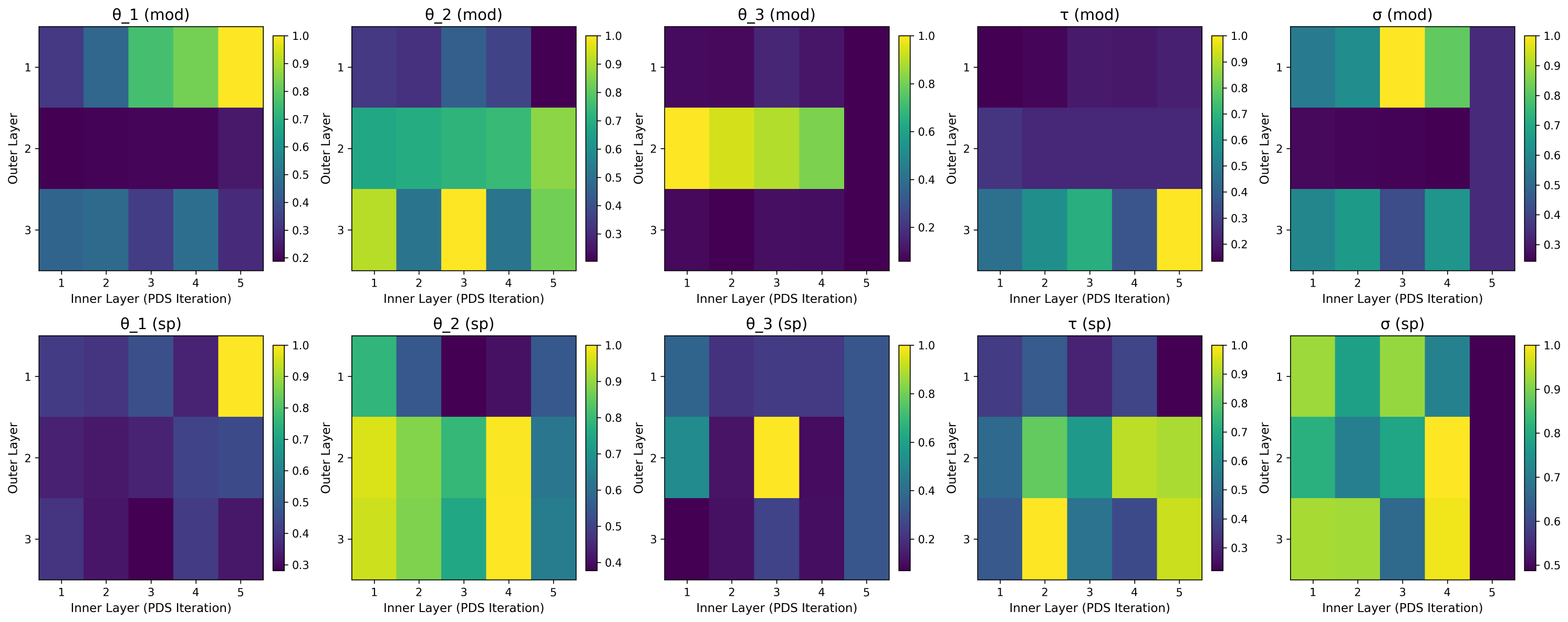}}\\
    \subfigure[Signal reconstruction parameters, MCAR]{\includegraphics[width=0.8\linewidth]{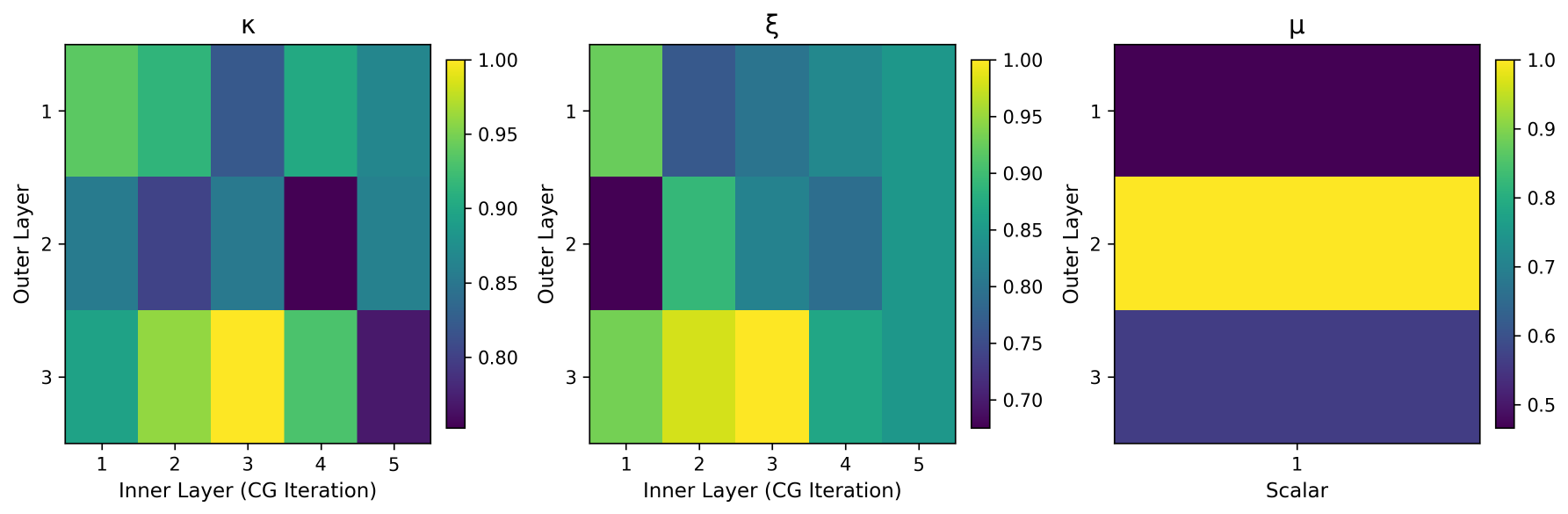}}\\
    \subfigure[Signal reconstruction parameters, MRSO]{\includegraphics[width=0.8\linewidth]{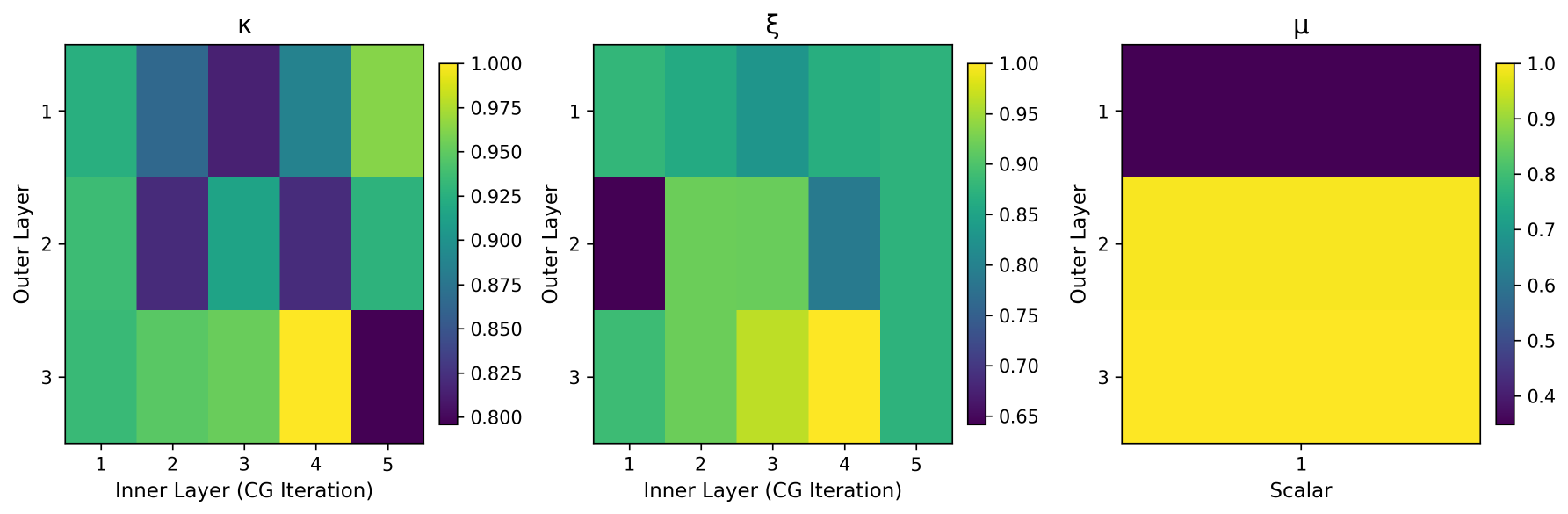}}
    \caption{Learned graph learning parameters (top) and signal reconstruction parameters (bottom) for MCAR (left) and MRSO (right).}
    \label{fig:learned_params}
\end{figure}

\vspace{3pt}\noindent\textit{Learned Graph Structures}:

Figure~\ref{fig:learned_graphs} shows the normalized weights of the learned spatial adjacency matrix $\mathbf{W}_s$ and signed modality adjacency matrix $\bar{\mathbf{W}}_m$ from a representative fold, taken from each layer of the unrolled network.
The spatial graph encodes proximity relations among the 141 meteorological stations partitioned into geographical regions; the modality graph encodes pairwise correlations---including signed ones---among the four observed quantities (temperature, pressure, humidity, and sunshine duration).

From the first to the third layer, the spatial graph becomes more sparse and localized, indicating that the network gathers global information in the early layers and refines local structure in the later layers. 
The modality graph consistently captures strong negative correlations between humidity and sunshine duration, which is a known physical relationship in meteorology.

\vspace{3pt}\noindent\textit{Learned Model Parameters}:

Figure~\ref{fig:learned_params} shows the learned graph-learning regularization parameters ($\theta_1$, $\theta_2$, $\theta_3$, $\tau$, $\sigma$) and signal-reconstruction parameters ($\kappa$, $\xi$, $\mu$) across the three outer layers, for a representative fold.
These parameters are initialized from the iterative solver and refined by end-to-end training; their layer-wise variation reflects the network's adaptation of the regularization strength and solver step sizes to the signal statistics at each stage of the unrolled computation.

From Figs.~\ref{fig:learned_params}(a) and~\ref{fig:learned_params}(b), we see that $\theta_1$ and $\theta_2,\theta_3$ show alternating pattern across the outer layers, which may reflect the network's strategy of alternating between stronger data fitting and stronger sparsity regularization in the graph learning modules. 
The PDHG step sizes $\tau$ and $\sigma$ also vary across layers, suggesting that the network learns to adjust the convergence speed of the inner graph learning iterations at each stage. 
Specifically, primal step size $\tau$ tends to lower values when the structural regularization parameters are higher, which reflects the conservative updates on the graph weights when the regularization is strong.
On the other hand, the dual step size $\sigma$ show a clear pattern of cliff-like decrease near the last inner layer, which may reflect the network's strategy of prioritizing the primal updates over the dual updates in the later stages of graph learning, when the graph estimates are more refined and require more careful adjustments.

From Figs.~\ref{fig:learned_params}(c) and~\ref{fig:learned_params}(d), we see that the twofold-graph regularization weight $\mu$ tends to increase across layers, indicating that the network relies more on the learned graph structure for signal reconstruction in the later stages. 
The CG step sizes $\kappa$ and $\xi$ also show increasing trends across the outer layers. 
This is consistent with the architectural design of the unrolled network, where the later layers can afford more aggressive updates as the graph estimates become more accurate and the signal reconstruction becomes more reliable.

\section{Conclusions}
\label{sec:conclusion}
We proposed an interpretable deep network for jointly restoring multimodal graph signals and learning their underlying twofold graph structure. 
By placing a matrix normal prior on the signal, we derived a MAP objective in which the spatial and modality graph Laplacians serve as precision matrices, and solved it via alternating minimization over three variables: the signal, the unsigned spatial graph, and the signed modality graph. 
Each sub problem is handled by a dedicated differentiable solver---conjugate gradient for the Sylvester-type signal update and PDHG for each graph update---which are unrolled into a feedforward network whose regularization weights and step sizes are learned end-to-end. 
The sign structure of the modality graph is estimated via subspace iteration on a spectral kernel, yielding a continuous relaxation that is compatible with backpropagation.

Experiments on synthetic multimodal graph signals and \blue{two real-world datasets (Japan meteorological and Beijing air-quality data)} confirm that the proposed method outperforms competing baselines in both denoising and interpolation across a range of noise levels and missing-data patterns.
The ablation study shows that signed modality graph learning provides the largest gain when negative inter-modal correlations are present, and that end-to-end training of the unrolled parameters improves robustness under high noise\blue{; the optimal sign configuration depends on the nature of the modality correlations, with the subspace estimator serving as a robust default}.
\blue{The direct graph recovery evaluation shows that the learned graphs are closer to the ground truth than those of competing methods in most cases, although the training objective is restoration rather than graph identification.}
On the \blue{meteorological} dataset, the proposed method also exhibits lower cross-fold variance than the baselines, suggesting stable generalization across different years and missing-data regimes.

\ifdoubleblind\else
\section*{Acknowledgments}
This work was supported in part by JSPS KAKENHI under Grants 24K15047, 25KJ1755, 26H02536 and JST AdCORP under Grant JPMJKB2307.
\fi

\clearpage
\printbibliography

\end{document}